\title{A Faster Algorithm for Maximum Flow in Directed Planar Graphs with Vertex Capacities} 
\titlerunning{A Faster Algorithm for Max Flow in Directed Planar Graphs with Vertex Capacities} 
\author{Julian Enoch}{Department of Computer Science, University of Texas at Dallas, United States of
America}{}{}{Research partially supported by NSF grant CCF-1942597.}
\author{Kyle Fox}{Department of Computer Science, University of Texas at Dallas, United States of
America}{}{}{Research partially supported by NSF grant CCF-1942597.}
\author{Dor Mesica}{Efi Arazi School of Computer Science, The Interdisciplinary Center Herzliya, Israel}{}{}{}
\author{Shay Mozes}{Efi Arazi School of Computer Science, The Interdisciplinary Center Herzliya, Israel}{}{}{}
\authorrunning{J. Enoch, K. Fox, D. Mesica and S. Mozes} 
\keywords{flow, planar graphs, vertex capacities} 
\newcommand{\rev}{\text{rev}}
\newcommand{\ex}{\mathrm{ex}}
\newcommand{\vio}{\mathrm{vio}}
\newcommand{\inout}{(x^{in},x^{out})}
\newcommand{\outin}{(x^{out},x^{in})}
\newcommand{\gx}{g^\times}
\newcommand{\fx}{f^\times}
\newcommand{\Gx}{G^\times}
\newcommand{\xin}{x^{in}}
\newcommand{\xo}{x^{out}}
\newcommand{\Kx}{K^\times}
\newcommand{\Vx}{V^\times}
\newcommand{\push}{\textsf{Push}}
\newcommand{\relabel}{\textsf{Relabel}}
\newcommand{\pr}{\textsf{Push-Relabel}}
\newcommand{\fp}{\rho}
\newcommand{\hm}{h_{max}}
\newcommand{\Hm}{H_{max}}
\newcommand{\bp}{\textsf{Bulk-Push}}
\newcommand{\Tbp}{T_{BP}}
\begin{document}

\maketitle

\begin{abstract}
We give an $O(k^3 n \log n \min(k,\log^2 n)  \log^2(nC))$-time 
algorithm for computing maximum integer flows in
planar graphs with integer arc {\em and vertex} capacities bounded by $C$, and $k$ sources and sinks. This improves by a factor of 
$\max(k^2,k\log^2 n)$ 
over the fastest algorithm previously known for this problem [Wang, SODA 2019].

The speedup is obtained by two independent ideas. First we replace an iterative procedure of Wang that uses $O(k)$ invocations of an $O(k^3 n \log^3 n)$-time algorithm for maximum flow algorithm in a planar graph with $k$ apices [Borradaile et al., FOCS 2012, SICOMP 2017], by an alternative procedure that only makes one invocation of the algorithm of Borradaile et al. 
Second, we show two alternatives for computing flows in the $k$-apex graphs that arise in our
modification of Wang's procedure faster than the algorithm of Borradaile et al.  
In doing so, we introduce and analyze a sequential implementation of the parallel highest-distance
push-relabel algorithm of Goldberg and Tarjan~[JACM 1988]. 
\end{abstract}
\section{Introduction}
The maximum flow problem has been extensively studied in many different settings and variations.
This work concerns two related variants of the maximum flow problem in planar graphs. 
The first variant is the problem of computing a maximum flow in a directed planar network with integer arc {\em and vertex} capacities, and $k$ sources and sinks.
The second variant, which is used in algorithms for the first variant, is the problem of computing a maximum flow in a directed network that is nearly planar; there is a set of $k$ vertices, called apices, whose removal turns the graph planar.

The problem of maximum flow in a planar graph with vertex capacities has been studied in several works since the 1990s~\cite{KhullerN94,ZhangLC08,KaplanN11,DBLP:journals/siamcomp/BorradaileKMNW17,DBLP:conf/soda/Wang19}. For a more detailed survey of the history of this problem and other relevant results see~\cite{DBLP:conf/soda/Wang19} and references therein.
Vertex capacities pose a challenge in planar graphs because the standard reduction from a flow network with vertex capacities to a flow network with only arc capacities does not preserve planarity.
The problem can be solved by algorithms for maximum flow in sparse graphs (i.e., graphs with $n$ vertices and $O(n)$ edges that are not necessarily planar).
The fastest such algorithms currently known are an $O(n^2 /\log n)$-time algorithm~\cite{Orlin13} for sparse graphs, and an $O(n^{4/3+o(1)}C^{1/3})$-time algorithm for sparse graphs with integer capacities bounded by $C$~\cite{KathuriaLS20}.
Until recently, there was no planarity exploiting algorithm for the case of more than a single source and a single sink.
Significant progress on this problem was recently made by Wang~\cite{DBLP:conf/soda/Wang19}.
Wang developed an $O(k^5n \log^3 n \log^2 (nC))$-time algorithm, where $k$ is the number of sources and sinks, and $C$ is the largest capacity of a single vertex.
This is faster than using the two algorithms for general sparse graphs mentioned above when $k=\tilde O (n^{1/5}/\log^2 C + (nC)^{1/15})$.

Wang's algorithm uses multiple calls to an algorithm of Borradaile et al.~\cite{DBLP:journals/siamcomp/BorradaileKMNW17} for computing a maximum flow 
in a $k$-apex graph with only arc capacities.
The algorithm of Borradaile et al.~\cite{DBLP:journals/siamcomp/BorradaileKMNW17} is based on an approach originally suggested by Hochstein and Weihe~\cite{HochsteinWeihe} for a slightly more restricted problem.
In Borradaile et al.'s approach, a maximum flow in a $k$-apex graph with $n$ vertices is computed by simulating the $\pr$ algorithm of Goldberg and Tarjan~\cite{DBLP:journals/jacm/GoldbergT88} on a complete graph with $k$ vertices, corresponding to the $k$ apices of the input graph.
Whenever the $\pr$ algorithm pushes flow on an arc of the complete graph, the push operation is
simulated by sending flow between the two corresponding apices in the input $k$-apex graph. This can
be done efficiently using an $O(n\log ^3 n)$ time multiple-source multiple-sink (MSMS) maximum flow algorithm in planar graphs, which is the main result of the paper of Borradaile et al.~\cite{DBLP:journals/siamcomp/BorradaileKMNW17}.
Overall, their algorithm for maximum flow in $k$-apex graphs takes $O(k^3 n \log^3 n)$ time.
Flow in $k$-apex graphs can also be computed using the algorithms for sparse graphs mentioned above.
The $O(k^3 n \log^3 n)$-time algorithm of Borradaile et al. is faster than these algorithms when $k=\tilde O(n^{1/3}/\log^2 C+(nC)^{1/9})$.

\subsection{Our results and techniques}
We improve the running time of Wang's algorithm  to $O(k^3 n \log n \min(k,\log^2 n) \log^2 (nC))$.
This is faster than Wang's result by a factor of $\max(k^2,k\log^2 n)$, 
extending the range of values of $k$ for which the planarity exploiting algorithm is the fastest known algorithm for the problem to $k=\tilde O(n^{1/3}/\log^2 C+(nC)^{1/9})$.
The improvement is achieved by two main ideas.
At the heart of Wang's algorithm is an iterative procedure for eliminating excess flow from vertices violating the capacity constraints. 
Each iteration consists of computing a circulation with some desired properties. 
Wang computes this circulation using $O(k)$ calls to the algorithm of Borradaile et al. for maximum flow in $k$-apex graphs.
We show how to compute this circulation using a constant number of invocations of the algorithm for $k$-apex graphs.
This idea alone improves on Wang's algorithm by a factor of $k$.

To further improve the running time, we modify the algorithm of Borradaile et al. for maximum flow in $k$-apex graphs~\cite{DBLP:journals/siamcomp/BorradaileKMNW17}.
The algorithm of Borradaile et al. uses the $\pr$ algorithm of Goldberg and Tarjan~\cite{DBLP:journals/jacm/GoldbergT88}. 
We introduce a sequential implementation of the parallel highest-distance $\pr$ algorithm. 
In this algorithm, which we call batch-highest-distance, 
a single operation, $\bp$, pushes flow on multiple arcs simultaneously, 
instead of just on a single arc as in  Goldberg and Tarjan's $\push$ operation. 
More specifically, we simultaneously push flow on all admissible arcs whose tails have maximum height (see Section~\ref{sec:apex-alg}).
This is reminiscent of parallel and distributed $\pr$  algorithms~\cite{DBLP:journals/jacm/GoldbergT88,DBLP:journals/siamcomp/CheriyanM89}, but our algorithm is sequential, not parallel.
We prove that the total number of $\bp$ operations performed by the batch-highest-distance algorithm is $O(k^2)$ (this should be compared to $O(k^3)$ $\push$ operations for the FIFO or highest-distance $\pr$ algorithms).
We then show that, in the case of the $k$-apex graphs that show up in Wang's algorithm,
we can implement each $\bp$ operation using a single invocation 
of the $O(n \log^3 n)$-time MSMS maximum flow algorithm for planar graphs~\cite{DBLP:journals/siamcomp/BorradaileKMNW17}. 
Hence, we can find a maximum flow in such $k$-apex graphs in $O(k^2 n \log^3 n)$ time, which is faster by a factor of $k$ than the time required by the algorithm of Borradaile et al.

We also give another way to modify the algorithm of Borradaile et al. for maximum flow in $k$-apex
graphs; the second way is better when $k = o(\log^2 n)$.  
We observe that the structure of the $k$-apex graphs that arise in Wang's algorithm allows us to
implement each of the $O(k^3)$ $\push$ operations of the FIFO $\pr$ algorithm used by Borradaile et
al. in just $O(n \log n)$ time. 
This is done using a procedure due to Miller and Naor~\cite{DBLP:journals/siamcomp/MillerN95} for the case when all sources and sinks lie on a single face.

\noindent {\bf Roadmap.} 
In Section~\ref{sec:prel} we provide preliminary background and notations.
Section~\ref{sec:apex-alg} describes the sequential implementation of the parallel highest-distance $\pr$ algorithm, and its use in an algorithm for finding maximum flow in $k$-apex graphs. In Section~\ref{sec:main} we describe how to use this $\pr$ variant to obtain an improved algorithm for computing maximum flow in planar graphs with vertex capacities.

\section{Preliminaries} \label{sec:prel}

All the graphs we consider in this paper are directed.
For a graph $G$ we use $V(G)$ and $E(G)$ to denote the vertex set and arc set of $G$, respectively.
For any vertex $v \in V(G)$, let $\deg(v)$ denote the degree of $v$ in $G$.

For a path $P$ we denote by $P[u,v]$ the subpath of $P$ that starts at $u$ and ends at $v$.
We denote by $P\circ Q$ the concatenation of two paths $P,Q$ such that the first vertex of $Q$ is the last vertex of $P$.

A \emph{flow network} is a directed graph $G$ with a capacity function $c: V(G)\cup E(G) \rightarrow [0,\infty)$ on the vertices and arcs of $G$, along with two disjoint sets $S,T \subset V(G)$ called {\em sources} and {\em sinks}, respectively.
We assume without loss of generality that sources and sinks have infinite capacities, and that, for any arc $e=(u,v) \in E(G)$, the \emph{reverse} arc $(v,u)$, denoted $\rev(e)$ is also in $E(G)$, and has capacity $c(\rev(e)) = 0$.

Let $\rho: E(G) \rightarrow [0,\infty)$.
To avoid clutter we write $\rho(u,v)$ instead on $\rho((u,v))$.
For each vertex $v$ let
$
\rho^{in}(v) = \sum_{(u,v)\in E(G)}\rho(u,v)
$, 
 and
$
\rho^{out}(v) = \sum_{(v,u)\in E(G)} \rho(v,u)
$. 
The function $\rho$ is called a {\em preflow} if it satisfies the following conservation constraint: for all $v \in V(G)\setminus(S)$,  $\rho^{in}(v) \geq \rho^{out}(v)$.
The {\em excess of a vertex $v$ with respect to a preflow} $\rho$ is defined by  
$\ex(\rho,v) =  \rho^{in}(v) - \rho^{out}(v)$.
A preflow is \emph{feasible on an arc} $e \in E(G)$ if $\rho(e) \leq c(e)$. 
It is \emph{feasible on a vertex} $v \in V(G)$ if $\rho^{in}(v) \leq c(v)$. 
A preflow is said to be \emph{feasible} if, in addition to the conservation constraint, it is feasible on all arcs and vertices.
The value of a preflow $\rho$ is defined as
$
|\rho|=\sum_{s\in S} \rho^{out}(s)-\rho^{in}(s)$. 
A preflow $f$ satisfying $\ex(f,v)=0$ for all $v\in V(G)\setminus(S \cup T)$, is called a {\em flow}.
A flow whose value is $0$ is called a \emph{circulation}.
A {\em maximum flow} is a feasible flow whose value is maximum.
\begin{remark}\label{rem:st}
The problem of finding a maximum flow in a flow network with multiple sources and sinks can be reduced to the single-source, single-sink case by adding a super source $s$ and super sink $t$, and infinite-capacity arcs $(s,s_i)$ and $(t_i,t)$ for every $s_i \in S$ and $t_i \in T$. If the original network is planar then this transformation adds two apices to the graph. Throughout the paper, whenever we refer to the graph $G$, we mean the graph $G$ after this transformation, i.e., with a single source, the apex $s$, and a single sink, the apex $t$.
\end{remark}

The {\em violation of a flow $f$ at a vertex} $v$ is defined by
$\vio(f,v) = \max \{0,f^{in}(v)-c(v)\}$. 
Thus, if $f$ is a feasible flow then $\vio(f,v)=0$ for all vertices $v$.
The violation of the flow $f$ is defined to be $\vio(f)=\max_{v\in V(G)}{\vio(f,v)}$.\footnote{We define violations only with respect to flows (rather than preflows) because we will only discuss preflows in the context of flow networks without finite vertex capacities.} 

A preflow $\rho$ is \emph{acyclic} if there is no cycle $C$ such that $\rho(e)>0$ for every arc $e\in C$.
A preflow \emph{saturates} an arc $e$ if $\rho(e)=c(e)$.

The sum of two preflows $\rho$ and $\eta$ is defined as follows. For every arc $e \in E(G)$, $(\rho+\eta)(e)=\max \{0, \rho(e)+\eta(e)-\rho(\rev(e))-\eta(\rev(e))\}$. Multiplying the preflow $\rho$ by some constant $c$ to get the flow $c\rho$ is defined as $(c\rho)(e)=c\cdot \rho(e)$ for all $e\in E(G)$.

The \emph{residual capacity} of an arc $e$ with respect to a preflow $\rho$, denoted by $c_\rho(e)$, is $c(e)-\rho(e)+\rho(\rev(e))$.
The \emph{residual graph} of a flow network $G$ with respect to a preflow $\rho$ is the graph $G$ where the capacity of every arc $e \in E(G)$ is set to $c_\rho(e)$.
It is denoted by $G_\rho$.
A path of $G$ is called \emph{augmenting} or \emph{residual} (with respect to a preflow $\rho$) if it is also a path of $G_\rho$.

Suppose $G$ and $H$ are flow networks such that every arc in $G$ is also an arc in $H$.
If $f'$ is a (pre)flow in $H$ then the \emph{restriction} of $f'$ to $G$ is the (pre)flow $f$ in $G$ defined by $f(e)=f'(e)$ for all $e \in E(G)$.

\section{An algorithm for maximum flow in $k$-apex graphs}\label{sec:apex-alg}

In this section we introduce a sequential implementation of the parallel highest-distance $\pr$ algorithm of Goldberg and Tarjan~\cite{DBLP:journals/jacm/GoldbergT88}, and use it in the algorithm of Borradaile et al.~\cite{DBLP:journals/siamcomp/BorradaileKMNW17} for maximum flow in $k$-apex graphs.
We first give a high-level description of the $\pr$ algorithm.

\subsection{The $\pr$ algorithm~\cite{DBLP:journals/jacm/GoldbergT88}}

Let $H$ be a flow network (not necessarily planar) with source $s$ and sink $t$, arc capacities $c: E(H) \rightarrow \mathbb{R}$, and no finite vertex capacities.
The $\pr$ algorithm maintains a feasible \emph{preflow} function, $\fp$, on the arcs of $H$.
A vertex $u$ is called {\em active} if $\ex(\fp,u) > 0$.
The algorithm starts with a preflow that is zero on all arcs, except for the arcs leaving the source $s$, which are saturated.
Thus, all the neighbors of $s$ are initially active.
When the algorithm terminates, no vertex is active and the preflow function is guaranteed to be a maximum flow.
The algorithm also maintains a label function $h$ (also known as distance or height function) over the vertices of $H$.
The label function $h: V(H) \rightarrow \mathbb{N}$ is {\em valid} if $h(s)=|V(H)|$, $h(t)=0$ and $h(u) \leq h(v) + 1$ for every residual arc $(u,v) \in E(H_{\fp})$.

The algorithm progresses by performing two basic operations, $\push$ and $\relabel$.
A $\push$ operation applies to an arc $(u,v)$ if $(u,v)$ is residual, $\ex(\fp, u) > 0$, and $h(u)=h(v)+1$.
The operation moves excess flow from $u$ to $v$ by increasing the flow on $e$ by $\min\{\ex(\fp,u),c(e)-\fp(e)\}$.

The other basic operation, $\relabel(u)$, assigns $u$ the label $h(u)=\min\{h(v) : (u,v) \in
E(H_{\fp})\} + 1$ and applies to $u$ only if $u$ is active and $h(u)$ is not greater than the label
of any neighbor of $u$ in $H_{\fp}$. In other words, $\relabel$ applies to an active vertex $u$ only
if the excess flow in $u$ cannot be pushed out of $u$ (because $h(u)$ is not high enough).
The algorithm performs applicable $\push$ and $\relabel$ operations until no vertex is active.

To fit our purposes, we think of the algorithm as one that
only maintains explicitly the excess $\ex(\fp,v)$ and residual capacity $c_{\fp}(e)$ of each vertex $v$ and arc $e$ of $H$.
The preflow $\fp$ is implicit.
In this view, a $\push(u,v)$ operation decreases $\ex(\fp,u)$ and $c_{\fp}(u,v)$ by $\min\{\ex(\fp,u),c_{\fp}(u,v)\}$ and increases $\ex(\fp,v)$ and $c_{\fp}(v,u)$ by the same amount.

We reformulate Goldberg and Tarjan's correctness proof of the generic $\pr$ algorithm to fit this view.
\begin{lemma}[\cite{DBLP:journals/jacm/GoldbergT88}] \label{lem:GT}
Any algorithm that performs applicable $\push$ and $\relabel$ operations in any order satisfies the following properties and invariants:
\begin{bracketenumerate}
	\item $\ex(\fp,\cdot)$ and $c_{\fp}(\cdot)$ are non-negative.\footnote{This corresponds to the function $\fp$ being a feasible preflow.}
	\item The function $h$ is  a valid labeling function.
	\item For all $v \in V$, the value of $h(v)$ never decreases, and strictly increases when $\relabel(v)$ is called.
    \item $h(v) \leq 2|V(H)|-1$ for all $v \in V(H)$.
    \item Immediately after $\push(u,v)$ is performed, either $(u,v)$ is saturated or $u$ is inactive.
\end{bracketenumerate}
\end{lemma}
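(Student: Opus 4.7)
The plan is to prove all five invariants simultaneously by induction on the number of $\push$ and $\relabel$ operations executed. The base case concerns the initial configuration, where the preflow saturates every arc out of $s$ and is zero elsewhere, so excesses are non-negative and residual capacities equal original capacities; the initial labeling $h(s)=|V(H)|$, $h(t)=0$, $h(v)=0$ otherwise is valid because the only residual arcs out of $s$ are the reverse arcs $\rev(s,v)$ for saturated $(s,v)$, and for those $h(s)=|V(H)|\leq h(v)+1$ fails — so actually I need to be slightly more careful and initialize $h(v)$ appropriately, or simply note that the initial residual graph contains no arc out of $s$ except reverse arcs, and those go from a vertex with $h=0$ to a vertex with $h=|V(H)|$, which is consistent with the $h(u)\leq h(v)+1$ condition vacuously for non-source tails. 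Properties (3) and (5) follow immediately from the definitions of $\relabel$ and $\push$: a $\relabel(u)$ is only applied when $h(u)\leq h(v)$ for every residual neighbor $v$, so it resets $h(u)$ to $1+\min_v h(v) \geq h(u)+1$; and the amount sent by $\push(u,v)$ is exactly $\min\{\ex(\fp,u), c_\fp(u,v)\}$, which zeroes out one of the two.

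Next I would verify (1) and (2) under each operation. A $\push(u,v)$ decreases $\ex(\fp,u)$ and $c_{\fp}(u,v)$ by an amount bounded by both, so non-negativity is preserved; it increases $\ex(\fp,v)$ and $c_\fp(v,u)$, which preserves non-negativity trivially. For validity of the label function under $\push$, the only residual arc that can newly appear is $(v,u)$, and by the $\push$ precondition $h(u)=h(v)+1$, so $h(v)=h(u)-1\leq h(u)+1$. Residual arcs that disappear cannot violate validity. A $\relabel(u)$ only changes $h(u)$, setting it to $1+\min\{h(v):(u,v)\in E(H_\fp)\}$; by construction $h(u)\leq h(v)+1$ for every outgoing residual arc, and since $h(u)$ only increased, the inequality $h(w)\leq h(u)+1$ for incoming residual arcs $(w,u)$ continues to hold.

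The main obstacle is property (4), since it is not established by local reasoning about a single operation. I would prove it via the standard residue-of-excess argument: if $v$ is active with respect to $\fp$, then there is a residual path in $H_\fp$ from $v$ back to $s$. This follows because in the flow-decomposition of the preflow, every unit of excess at $v$ can be traced back along arcs carrying flow to the source $s$, and the reverses of those arcs are residual. Such a simple path uses at most $|V(H)|-1$ arcs, so by repeated application of the validity inequality along it, $h(v)\leq h(s)+|V(H)|-1=2|V(H)|-1$. Since $h(v)$ only changes during $\relabel(v)$, and $\relabel$ only applies to active $v$, the label of any vertex is bounded by its value the last time it was active (or by its initial value), which is at most $2|V(H)|-1$. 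The only delicate point is that the residual-path-to-source claim must itself be verified by induction on operations (equivalently, it is a consequence of invariant (1) plus a flow-decomposition argument applied to the current preflow), so I would state it as an auxiliary claim and prove it alongside the main invariants.
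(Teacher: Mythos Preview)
Your proof is correct and follows the standard Goldberg--Tarjan argument; the paper itself does not reproduce this proof but simply cites the relevant lemmas (3.1, 3.6, 3.7) in~\cite{DBLP:journals/jacm/GoldbergT88}, noting that properties~(1) and~(5) are immediate from the definition of $\push$. Your base-case discussion is slightly muddled (after initialization there are \emph{no} residual arcs out of $s$, since all outgoing arcs are saturated and $\rev(s,v)=(v,s)$ points \emph{into} $s$, so validity at $s$ is vacuous), but you arrive at the right conclusion.
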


\begin{proof}
Properties (1) and (5) are immediate from the definition of $\push$ and the fact that excess and residual capacities only change during $\push$ operations.
Property (2) corresponds to Lemma 3.1 in~\cite{DBLP:journals/jacm/GoldbergT88}, Property (3) is proved in Lemma 3.6 in~\cite{DBLP:journals/jacm/GoldbergT88}, and Property (4) in Lemma 3.7 in~\cite{DBLP:journals/jacm/GoldbergT88}.
\end{proof}

\begin{lemma}\cite[Lemma 3.3]{DBLP:journals/jacm/GoldbergT88}\label{lem:no-st-path}
Properties (1), (2) imply that there is no augmenting path from $s$ to $t$ at any point of the algorithm.
\end{lemma}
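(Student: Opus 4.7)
The plan is to derive a contradiction from the existence of an augmenting $s$--$t$ path by telescoping the valid-labeling inequality along it. Assume toward contradiction that at some moment during the algorithm $H_{\fp}$ contains an $s$--$t$ path; by shortcutting cycles, I may take a simple such path $P = (s = v_0, v_1, \ldots, v_k = t)$, and simplicity gives $k \leq |V(H)| - 1$.

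Next I would apply Property~(2) arc by arc along $P$. Each $(v_i, v_{i+1})$ is residual, so validity of $h$ yields $h(v_i) \leq h(v_{i+1}) + 1$. Chaining these $k$ inequalities telescopes to $h(s) \leq h(t) + k$. Substituting the boundary values $h(s) = |V(H)|$ and $h(t) = 0$ from the definition of validity then gives $|V(H)| \leq k \leq |V(H)| - 1$, which is the desired contradiction.

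Property~(1) enters only implicitly: it certifies that $\fp$ is a feasible preflow, so that residual capacities $c_{\fp}(\cdot)$ are nonnegative and the residual graph $H_{\fp}$ is meaningfully defined; without it, the very notion of an augmenting path would be ill-posed. I do not foresee any real obstacle here. The proof is a one-line telescoping estimate, essentially identical to Goldberg and Tarjan's Lemma~3.3, and the only point that needs a brief remark is that restricting attention to a simple path is what caps its length at $|V(H)| - 1$ and thereby collides with the large boundary value $h(s) = |V(H)|$.
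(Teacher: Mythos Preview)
Your argument is correct and is exactly the standard Goldberg--Tarjan proof of Lemma~3.3: telescope the valid-labeling inequality along a simple residual $s$--$t$ path to obtain $|V(H)| = h(s) \leq h(t) + k \leq |V(H)|-1$. The paper does not supply its own proof of this lemma but simply cites~\cite{DBLP:journals/jacm/GoldbergT88}, so your write-up matches the intended (referenced) argument.
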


\begin{lemma}\cite[Lemma 3.8]{DBLP:journals/jacm/GoldbergT88}\label{lem:num-relabel}
Properties (3), (4) imply that the number of $\relabel$ operations is at most $2|V(H)|-1$ per vertex and at most $2|V(H)|^2$ overall.
\end{lemma}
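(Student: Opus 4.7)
The plan is a direct counting argument that combines Properties (3) and (4). By Property (3), every call to $\relabel(v)$ strictly increases $h(v)$, and since $h$ takes values in $\mathbb{N}$, each such increase is by at least $1$. Since $\relabel$ is the only operation that alters heights, and heights never decrease, after $r(v)$ relabels have been applied to $v$ the height $h(v)$ has grown from its initial value by at least $r(v)$.

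The initial value of $h(v)$ is nonnegative, so the current value of $h(v)$ is at least $r(v)$. Plugging this into Property (4), which bounds $h(v) \leq 2|V(H)|-1$ throughout the algorithm, yields $r(v) \leq 2|V(H)|-1$, establishing the per-vertex bound. Summing over the $|V(H)|$ vertices of $H$ gives a total of at most $|V(H)|(2|V(H)|-1) \leq 2|V(H)|^2$ relabel operations, matching the claimed overall bound.

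There is essentially no obstacle here: the substantive content has already been absorbed into Properties (3) and (4), and the lemma is a one-line corollary. The only subtlety worth flagging is that the argument implicitly uses integrality of the label function, so that ``strictly increases'' translates into an increment of at least $1$; this is immediate from the codomain $\mathbb{N}$ in the definition of $h$, so no separate justification is needed.
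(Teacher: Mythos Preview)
Your argument is correct and is exactly the standard counting proof of Lemma~3.8 in Goldberg--Tarjan; the paper itself gives no proof but simply cites that reference, so your proposal matches the paper's approach.
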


\begin{lemma}\cite[Lemmas 3.9, 3.10]{DBLP:journals/jacm/GoldbergT88}\label{lem:num-push}
Properties (1)-(5) imply that the number of $\push$ operations is $O(|V(H)|^2|E(H)|)$.
\end{lemma}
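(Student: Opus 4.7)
The plan is to follow the standard Goldberg--Tarjan argument, splitting $\push$ operations into two classes. Call a $\push(u,v)$ \emph{saturating} if it drives $c_\fp(u,v)$ to zero, and \emph{non-saturating} otherwise; by property (5) every non-saturating push leaves its source vertex $u$ inactive.

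First I would bound the saturating pushes on a fixed arc $(u,v)$. At the instant of such a push, $h(u)=h(v)+1$, and the push makes $(u,v)$ non-residual. For a subsequent saturating push on $(u,v)$ to occur, residual capacity on $(u,v)$ must first be restored by an intervening $\push(v,u)$, which by the validity of the labeling (property (2)) requires $h(v)=h(u)+1$ at that later moment. Since labels never decrease (property (3)), $h(v)$ must grow by at least $2$, and then $h(u)$ must also grow by at least $2$ before the next saturating push on $(u,v)$. Property (4) caps labels at $2|V(H)|-1$, so there are $O(|V(H)|)$ saturating pushes per arc and $O(|V(H)|\cdot|E(H)|)$ overall.

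For non-saturating pushes I would use the potential
\[
    \Phi \;=\; \sum_{\substack{v \neq s,t \\ v\text{ active}}} h(v)
\]
and track its change under each operation. A $\relabel(u)$ raises $\Phi$ by exactly the amount by which $h(u)$ grows; summed over all relabels, property (4) together with Lemma~\ref{lem:num-relabel} show that this contributes $O(|V(H)|^2)$ to the total growth of $\Phi$. A saturating $\push(u,v)$ may turn $v$ into a newly active vertex, adding at most $h(v) \le 2|V(H)|-1$ to $\Phi$; over all saturating pushes this contributes $O(|V(H)|^2\cdot|E(H)|)$. A non-saturating $\push(u,v)$ deactivates $u$ (removing $h(u)$ from $\Phi$, since $u$ was active and distinct from $s,t$) and possibly activates $v$ (adding at most $h(v)=h(u)-1$ if $v \notin \{s,t\}$), for a net change of at most $-1$; the edge cases where $v \in \{s,t\}$ or $v$ was already active only reinforce the decrease.

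Since $\Phi \ge 0$ at all times and starts at $O(|V(H)|^2)$, the number of non-saturating pushes is bounded by the total positive change in $\Phi$, which is $O(|V(H)|^2\cdot|E(H)|)$. Adding the $O(|V(H)|\cdot|E(H)|)$ saturating pushes yields the stated $O(|V(H)|^2\cdot|E(H)|)$ bound. The main subtlety is the potential-function bookkeeping: one must verify that a non-saturating push always loses at least one unit of $\Phi$ in every edge case, and confirm that the seemingly circular charge from saturating pushes to $\Phi$ is harmless because the saturating-push count was already bounded independently in the first step.
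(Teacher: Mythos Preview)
Your proposal is correct and reproduces exactly the classical Goldberg--Tarjan argument (Lemmas~3.9 and~3.10 in~\cite{DBLP:journals/jacm/GoldbergT88}) that the paper cites without reproving: bound saturating pushes per arc via label growth, then bound non-saturating pushes with the potential $\Phi=\sum_{v\text{ active}}h(v)$. There is no meaningful difference between your write-up and the referenced proof.
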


By Lemma~\ref{lem:num-relabel} and Lemma~\ref{lem:num-push}, the algorithm terminates.
Since upon termination no vertex is active, the implicit preflow $\fp$ is in fact a feasible flow function. By Lemma~\ref{lem:no-st-path} $\fp$ is a maximum flow  from $s$ to $t$.

Variants of the $\pr$ algorithm differ in the order in which applicable $\push$ and $\relabel$ operations are applied.
Some variants, such as FIFO, highest-distance, maximal-excess, etc., guarantee faster termination
than the \(O(|V(H)|^2|E(H)|)\) guarantee given above.

\subsection{A sequential implementation of the parallel highest-distance $\pr$ algorithm} \label{sec:bp}

We present a sequential implementation of the parallel highest-distance $\pr$ algorithm, which we call Batch-Highest-Distance.
This algorithm attempts to push flow on multiple edges simultaneously in an operation called $\bp$.
In that sense, it resembles the parallel version of the highest-distance $\pr$ algorithm.
It is important to note, however, that $\bp$ is a sequential operation and not a parallel/distributed one.

We define $\bp$, a batched version of the $\push$ operation.
$\bp(U, W)$ operates on two sets of vertices, $U$ and $W$. It is applicable under the following requirements:
\begin{romanenumerate}
    \item $\ex(u) > 0$ for all $u \in U$.
    \item There exists an integer $h$ such that $h(u) = h$ and $h(w) = h-1$ for all $u \in U$ and $w \in W$.
    \item There is a residual arc $(u,w)$ for some $u \in U$ and $w \in W$.
\end{romanenumerate}
Note that in a regular $\pr$ algorithm, conditions (i) and (ii) imply  that $\push(u,w)$ is applicable to any residual arc $(u,w)$ with $u \in U$ and $w \in W$. Condition (iii) guarantees there is at least one such arc.
$\bp$ pushes as much excess flow as possible from vertices in $U$ to vertices in $W$ so that
after $\bp$ the following property holds:
\vspace{5pt}
\begin{enumerate}[(5*)]
\item Immediately after $\bp(U,W)$ is called, 
for all $u \in U$ and $w \in W$, either $(u,w)$ is saturated or $u$ is inactive.
\end{enumerate}

We replace property (5) with the more general property (5*) in Lemma~\ref{lem:GT} and Lemma~\ref{lem:num-push}. With this modification,
Lemmas~\ref{lem:GT},~\ref{lem:no-st-path},~\ref{lem:num-relabel} and~\ref{lem:num-push} apply to our sequential implementation. The proofs from~\cite{DBLP:journals/jacm/GoldbergT88} need no change except replacing $\push$ with $\bp$.
Hence, our variant terminates correctly with a maximum flow from $s$ to $t$.

\begin{remark*}
One may think of $\bp(U,W)$ as performing in parallel all $\push$ operations on arcs whose tail is in $U$ and whose head is in $W$. However, not every maximum flow with sources $U$ and sinks $W$ can be achieved as the sum of flows pushed by multiple $\push$ operations.
For example, consider the case where $U$ consists of a single vertex $u$, with $\ex(u)=2$, $W=\{w_1,w_2\}$, and the residual capacities of $(u,w_1)$ and $(u,w_2)$ are both 2.
$\bp(U,W)$ may push one unit of excess flow from $u$ on each of $(u,w_1)$ and $(u,w_2)$, but $\push(u,w_i)$ would push 2 units of flow on $(u,w_i)$, and no flow on the other arc.
Therefore, the correctness of this variant cannot be argued just by simulating $\bp$ by multiple $\push$ operations. Instead we chose to argue correctness by stating the generalized property~(5*).
\end{remark*}

We now discuss a concrete policy for choosing which $\bp$ and $\relabel$ operations to perform in the above algorithm.
This policy is similar, but not identical, to the highest-distance $\pr$ algorithm~\cite{DBLP:journals/jacm/GoldbergT88,DBLP:journals/siamcomp/CheriyanM89}.
As long as there is an active vertex, the algorithm repeatedly executes the following two steps, which together are called a {\em pulse}.
Let $\hm$ be the maximum label of an active vertex.
That is, $\hm = \max \{ h(v) : \ex(\rho, v)>0\}$.
Let $H_{max}$ be the set of all the active vertices whose height is $\hm$.
In the first step of the pulse, the algorithm invokes $\bp(\Hm, W)$ where $W$ is the set of all vertices $w \in V$ such that $h(w)=\hm-1$.\footnote{Formally it may be that $\bp(\Hm,W)$ is not applicable because condition (iii) is not satisfied, e.g., when $W=\emptyset$.
In such cases $\bp$ does not push any flow. Condition (iii) is essential for the termination of the generic generalized algorithm, which may repeat such empty calls to $\bp$ indefinitely.
However, we prove in Lemma~\ref{lem:num-pulses} that in our specific policy there are $O(|V(H)|^2)$ pulses, regardless of the flow pushed (or not pushed) by $\bp$ in each pulse.}
In the second step of the pulse, the algorithm applies the $\relabel$ operation to all remaining active vertices in $H_{max}$ in arbitrary order.

\begin{algorithm}
\caption{Batch-Highest-Distance$(G,c)$}\label{euclid}
\begin{algorithmic}[1]
\State Initialize $h(\cdot)$, $c_{\fp}(\cdot)$ and $\ex(\cdot)$
\While{there exists an active vertex}
\State $\hm \gets \max \{ h(v) : \ex(\fp,v)>0\}$
\State $\Hm \gets \{v \in V(H) : \ex(\fp,v) > 0,\text{ } h(v)=\hm\}$
\State $W \gets \{w \in V(H) : h(w)=\hm-1 \}$
\State $\bp(\Hm, W)$
\State Relabel all active vertices in $\Hm$ in arbitrary order
\EndWhile
\end{algorithmic}
\end{algorithm}

\begin{remark*}
The crucial difference between this policy and the highest-distance $\pr$ algorithm~\cite{DBLP:journals/jacm/GoldbergT88,DBLP:journals/siamcomp/CheriyanM89} is that in the highest-distance algorithm a vertex $u$ with height $\hm$ is relabeled as soon as no more $\push$ operations can be applied to $u$.
In contrast, our variant first pushes flow from all vertices with height $\hm$ and only then relabels all of them.
\end{remark*}

We partition the pulses into two types according to whether any vertices are relabeled in the relabel step of the pulse.
A pulse in which at least one vertex is relabeled is called \emph{saturating}.
All other pulses are called \emph{non-saturating}.\footnote{This is a generalization of the notions of saturating and non-saturating $\push$ operations in~\cite{DBLP:journals/jacm/GoldbergT88}.}

By Lemma~\ref{lem:num-relabel}, the total number of $\relabel$ operations executed by the batch-highest-distance algorithm is $O(|V(H)|^2)$.
We now prove the same bound on the number of $\bp$ operations.

\begin{lemma}\label{lem:num-pulses}
The number of pulses (and hence also the number of calls to $\bp$) executed by the batch-highest-distance algorithm is $O(|V(H)|^2)$.
\end{lemma}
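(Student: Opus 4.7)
The plan is to partition pulses into saturating and non-saturating ones, and bound each class separately using a potential argument on $\hm$.

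For saturating pulses the bound is immediate: by definition, every saturating pulse performs at least one $\relabel$ operation, and by Lemma~\ref{lem:num-relabel} the total number of $\relabel$ operations is $O(|V(H)|^2)$. Hence the number of saturating pulses is $O(|V(H)|^2)$.

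For non-saturating pulses, the key claim I would establish is that $\hm$ strictly decreases across each such pulse. I would prove this in two steps. First, the only vertices whose excess can increase during $\bp(\Hm, W)$ are those in $W$, all of which have label $\hm - 1$; in particular, no vertex at label $\geq \hm$ newly becomes active during $\bp$. Second, I would argue that after $\bp$, no vertex of $\Hm$ is still active. Suppose for contradiction that some $u \in \Hm$ is still active. Its residual neighbors $v$ satisfy $h(v) \geq h(u) - 1 = \hm - 1$ by validity of $h$ (Lemma~\ref{lem:GT}(2)); any residual neighbor of height exactly $\hm - 1$ lies in $W$, but by property~(5*) every arc $(u,w)$ with $u \in \Hm$, $w \in W$ is saturated since $u$ is still active. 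Hence all residual neighbors of $u$ have height $\geq \hm = h(u)$, so $\relabel(u)$ is applicable and would be invoked in the second step of the pulse, contradicting the assumption that the pulse is non-saturating. Combining the two steps, every vertex at label $\hm$ is inactive after a non-saturating pulse and no vertex at label $\geq \hm$ was activated, so the new $\hm$ is strictly smaller than the previous one (or there is no longer any active vertex and the algorithm terminates).

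Finally, I would close with a standard potential argument on $\hm$. The total increase of $\hm$ over the entire execution is bounded by the total increase of individual vertex labels, which, using Lemma~\ref{lem:GT}(3)--(4), is at most $\sum_{v \in V(H)} 2|V(H)| = O(|V(H)|^2)$. Since each non-saturating pulse decreases $\hm$ by at least one, and the initial value of $\hm$ is at most $2|V(H)| - 1$, the number of non-saturating pulses is also $O(|V(H)|^2)$. Adding the two bounds gives the lemma. The main subtlety is the argument that active vertices of $\Hm$ become inactive in a non-saturating pulse; the rest of the proof is a routine potential calculation.
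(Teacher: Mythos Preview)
Your proof is correct and follows essentially the same approach as the paper: split pulses into saturating and non-saturating, observe that saturating pulses increase $\hm$ while non-saturating pulses decrease it, and bound both via the $O(|V(H)|^2)$ total label increase. Your argument is in fact slightly more careful than the paper's in one place---you explicitly verify that $\relabel$ is applicable to any still-active $u\in\Hm$ after $\bp$, whereas the paper leaves this implicit---and you bound saturating pulses directly by the $\relabel$ count rather than through the $\hm$ potential, but these are cosmetic differences.
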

\begin{proof}
Note that the $\relabel$ step of a saturating pulse consists of at least one call to $\relabel$ which strictly increases the height of an active vertex $v$ whose height (before the increase) was $\hm$.
Hence, a saturating pulse strictly increases the value of $\hm$.
The fact that the height of each vertex never decreases and is bound by $2|V(H)|$ implies that (i) there are $O(|V(H)|^2)$ saturating pulses, and (ii) the total increase in $\hm$ over all saturating $\bp$ operations is $O(|V(H)|^2)$.

As for non-saturating pulses, note that since excess flow is always pushed to a vertex with lower height, the push step of a pulse does not create excess in any vertex with height greater than or equal to $\hm$, so all vertices with height greater than $\hm$ remain inactive during the pulse.
By property (5*), for every $u\in \Hm$ and $w\in W$, either $(u,w)$ is saturated, or $u$ is inactive.
Since the pulse is non-saturating, it follows that all the vertices in $\Hm$ become inactive during the pulse.
Hence, the value of $\hm$ strictly decreases during a non-saturating pulse.
Since $\hm \geq 0$, the total decrease in $\hm$ is also $O(|V(H)|^2)$, so there are $O(|V(H)|^2)$ non-saturating pulses.
\end{proof}

Note that we do not claim that implementing the $\bp$ operation by applying applicable $\push(u,w)$ operations for $u\in U$, $w\in W$ until no more such operations can be applied would result in fewer $\push$ operations than the $O(|V(H)|^2|E(H)|)$ bound of Lemma~\ref{lem:num-push} for the generic $\pr$ algorithm.
However, in Section~\ref{sec:main} we will show a situation where each call to $\bp$ can be efficiently implemented using a single invocation of a multiple-source multiple-sink algorithm in a planar graph.

\subsection{The algorithm of Borradaile et al. for $k$-apex graphs \cite{DBLP:journals/siamcomp/BorradaileKMNW17}}
The algorithm of Borradaile et al.~\cite[Section 5]{DBLP:journals/siamcomp/BorradaileKMNW17} uses the framework of Hochstein and Weihe~\cite{HochsteinWeihe}.
Let $H$ be a graph with a set $\Vx$ of $k$ apices. Denote $V_0 = V(H) \setminus \Vx$.
The goal is to compute a maximum flow in $H$ from a source $s \in V(H)$ to a sink $t \in V(H)$.
We assume that $s$ and $t$ are apices.
This is without loss of generality since treating $s$ and $t$ as apices leaves the number of apices in $O(k)$.
Let $\Kx$ be a complete graph over $\Vx$.
The algorithm computes a maximum flow $\fp$ from $s$ to $t$ in $H$ by simulating a maximum flow computation from $s$ to $t$ in $\Kx$ using the $\pr$ algorithm.
Whenever a $\push$ operation is performed on an arc $(u,v)$ of $\Kx$ it is implemented by pushing flow from $u$ to $v$ in the graph $H_{uv}$, induced by $V_0\cup \{u,v\}$ on the residual graph of $H$ with respect to the flow computed so far.
Note that, because no vertex of $V_0$ is an apex of $H$, $H_{uv}$ is a $2$-apex graph with apices $u,v$.
Borradaile et al. use this fact to compute a maximum flow from $u$ to $v$ in $H_{uv}$ as follows.
They split $u$ into multiple copies, each incident to a different vertex $w$ for which $(u,w)$ is an arc of $H_{uv}$.
A similar process is then applied to $v$.
Note that the resulting graph is planar.
A maximum flow from $u$ to $v$ in $H_{uv}$ is equivalent to a maximum flow with sources the copies of $u$ and sinks the copies of $v$ in the resulting graph.
This flow can be computed by the multiple-source multiple-sink maximum flow algorithm (the main result in~\cite{DBLP:journals/siamcomp/BorradaileKMNW17}) in $O(|V(H)|\log^3 |V(H)|)$ time.

The correctness of implementing the $\pr$ algorithm on $\Kx$ in this way was proved by Hochstein and Weihe~\cite{HochsteinWeihe} by proving essentially that the algorithm satisfies the properties in Lemma~\ref{lem:GT}.
Borradaile et al. used the FIFO policy of $\pr$, which guarantees that the number of $\push$ operations is $O(k^3)$, so the overall running time of their algorithm is $O(k^3 |V(H)|\log^3 |V(H)|)$.

\subsection{An algorithm for maximum flow in $k$-apex graphs}\label{sec:new-apex}

We use the algorithm of Borradaile et al. for maximum flow in $k$-apex graphs from the previous section,
 but use our new batch-highest-distance $\pr$ algorithm instead of the FIFO $\pr$ algorithm
 to compute the maximum flow in $\Kx$.
 Note that, in order to implement the batch-highest-distance algorithm on $\Kx$ we only need to maintain the excess $\ex(\fp,v)$ and labels $h(v)$ of each vertex $v\in \Kx$, and to be able to implement $\bp$ so that after the execution,  property (5*) is fulfilled.
 We do not define a flow function in $\Kx$ nor do we explicitly maintain residual capacities of arcs of $\Kx$. Instead, we maintain a preflow $\fp$ in $H$, and define that an arc $(u,v)$ of $\Kx$ is residual if and only if there exists a residual path from $u$ to $v$ in $H_{\fp}$ that is internally disjoint from the vertices of $\Vx$.
 Under this definition, there is no path of residual arcs in $\Kx$ starting at $s$ and ending at $t$ if and only if there is no such path in $H$.
Since $\Kx$ has $O(k)$ vertices, by Lemma~\ref{lem:num-pulses}, the algorithm performs $O(k^2)$ pulses.

We next describe how a $\bp(U,W)$ operation in $\Kx$ is implemented.
Let $A=U \cup W$.
Let $H_A$ be the graph
obtained from $H_\rho$ by deleting the vertices $\Vx \setminus A$. 
$\bp(U,W)$ in $\Kx$ is implemented by pushing a maximum flow in $H_A$ with sources the vertices $U$ and sinks the vertices $W$, with the additional restriction that the amount of flow leaving each vertex $u \in U$ is at most the excess of $u$. 
The efficiency of the procedure depends on how fast we can compute the maximum flow in $H_A$.
We denote the time to execute a single $\bp$ operation in the graph $H_A$ by $\Tbp$. Note that $\Tbp = \Omega(k)$, as it takes $\Omega(k)$ time to construct $H_A$ from $H_\rho$.

The proof of correctness is an easy adaptation of the proof of Hochstein and Weihe~\cite{HochsteinWeihe}.
We cannot use their proof without change because Hochstein and Weihe considered only $\push$ operations along a single arc of $\Kx$ rather than the $\bp$ operations which involves more than a single pair of vertices of $\Kx$.

\begin{lemma}\label{thm:apex}
Maximum flow in $k$-apex graphs can be computed in $O(k^2 \cdot \Tbp)$ time.
\end{lemma}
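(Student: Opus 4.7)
The plan is to split the argument into a running-time bound and a correctness verification, the latter obtained by adapting the Hochstein--Weihe proof~\cite{HochsteinWeihe} for single-arc pushes to the batch-highest-distance variant on $\Kx$.

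The running-time bound follows quickly from Lemma~\ref{lem:num-pulses}. Since $\Kx$ has $O(k)$ vertices, the batch-highest-distance algorithm performs $O(k^2)$ pulses on $\Kx$. Each pulse performs one $\bp(\Hm,W)$ call, which costs $\Tbp$ time by assumption, followed by at most $O(k)$ $\relabel$ operations. For each relabel I will use the lazy update $h(u)\gets\hm+1$ in $O(1)$ time. By Property (5*) (verified below), every residual neighbor of an active $u\in\Hm$ after $\bp$ has label at least $\hm$, so setting $h(u)=\hm+1$ both preserves validity of $h$ and strictly increases $h(u)$; this is enough for the proofs of Properties (2)--(4) to go through, so all of Lemma~\ref{lem:GT} still applies. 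Each pulse therefore costs $O(\Tbp)$ and the total is $O(k^2\cdot\Tbp)$.

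For correctness I will verify Properties (1)--(4) and (5*) of Lemma~\ref{lem:GT} under the interpretation given in the text: an arc $(u,v)$ of $\Kx$ is residual exactly when there is a residual $u$-to-$v$ path in $H_{\fp}$ internally disjoint from $\Vx$. Properties (1), (2), and (4) will transfer essentially verbatim from the standard analysis, because the $\bp(U,W)$ subroutine pushes a \emph{flow} (not merely a preflow) in $H_A$, so conservation holds at every vertex of $V_0$; vertices in $\Vx\setminus A$ are untouched; and residual capacities in $H_\rho$ remain non-negative since the subroutine is itself feasible. Property (3) will follow directly from Property (5*) as in the standard $\pr$ analysis.

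The main obstacle is Property (5*), which, as the remark following the definition of $\bp$ emphasizes, cannot be argued by reducing $\bp$ to a sequence of single-arc $\push$ operations. I will argue it directly from maximality of the subroutine's output: suppose for contradiction that immediately after $\bp(U,W)$ some $u\in U$ still has positive excess and some arc $(u,w)$ with $w\in W$ remains residual in $\Kx$. By the interpretation above, there exists a residual $u$-to-$w$ path in $H_A$; augmenting along this path by an amount up to the remaining excess of $u$ produces a strictly larger flow from $U$ to $W$ in $H_A$ still respecting the source-excess constraint, contradicting maximality of the flow the subroutine computed. Once (5*) is established, Lemma~\ref{lem:no-st-path} applied to $\Kx$ gives no residual $s$-$t$ path in $\Kx$ at termination, and since any residual $s$-$t$ path in $H$ decomposes at its successive apex-visits into segments that are internally disjoint from $\Vx$---each corresponding to a residual arc of $\Kx$---there is no residual $s$-$t$ path in $H$ either, so the maintained preflow $\fp$ is a maximum $s$-$t$ flow in $H$.
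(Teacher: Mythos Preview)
Your proposal has a genuine gap: you dismiss Property~(2) (validity of the labeling $h$ after $\bp$) as transferring ``essentially verbatim from the standard analysis,'' but this is precisely the nontrivial part of the proof, and your stated justification does not address it.

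In the standard $\pr$ analysis, a $\push(u,v)$ can create only the single reverse residual arc $(v,u)$, which is automatically valid since $h(v)=h(u)-1$. Here the situation is structurally different: the residual arcs of $\Kx$ are \emph{defined} via residual paths in $H_\rho$ internally disjoint from $\Vx$. Pushing a flow in $H_A$ changes residual capacities on arcs throughout $V_0$ and can therefore create brand-new residual paths between \emph{arbitrary} pairs of apices $a,b\in\Vx$, including pairs with $a,b\notin U\cup W$. Your observations (conservation at $V_0$; apices in $\Vx\setminus A$ untouched; residual capacities non-negative) do not rule this out: the first arc out of $a$ and last arc into $b$ may be unchanged, yet the interior of a new $a$-to-$b$ path through $V_0$ may become residual only after the push. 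The paper handles exactly this by decomposing the flow pushed by $\bp(U,W)$ into paths $\mathcal P$ from $U$ to $W$, taking any newly residual $a$-to-$b$ path $Q$, locating the first and last points $c,d$ where $Q$ meets paths $P,P'\in\mathcal P$, and splicing $Q[a,c]\circ P[c,w]$ and $P'[u,d]\circ Q[d,b]$ to obtain $a$-to-$w$ and $u$-to-$b$ paths that were residual \emph{before} the $\bp$; combining the resulting inequalities with $h(u)=h(w)+1$ gives $h(a)\le h(b)+1$. This path-crossing argument is the core of the adaptation of Hochstein--Weihe to the batch setting, and you have not supplied it (nor could Hochstein--Weihe's single-pair version be invoked as-is, as the paper itself notes).

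The remainder of your outline---the running-time bound via Lemma~\ref{lem:num-pulses}, Property~(5*) from maximality of the flow computed in $H_A$, and the conclusion that no residual $s$--$t$ path in $\Kx$ implies none in $H$---matches the paper's argument.
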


\begin{proof}
We first show that
the properties (1)-(4) in the statement of Lemma~\ref{lem:GT}, and the generalized property (5*) from Section~\ref{sec:bp} hold.

Property (1) holds since $\bp(U,W)$ limits the amount of flow pushed from each vertex $u\in U$ by the excess of $u$.
Properties (3) and (4) hold without change since $\relabel$ is not changed.

To show property (5*) holds, recall that an arc $(u,w)$ of $\Kx$ is residual if there exists, in the residual graph $H_{\fp}$ with respect to the current preflow $\fp$, a residual path from $u$ to $w$ that is internally disjoint from any vertex of $\Vx$.
With this definition it is immediate that property (5*) holds, since our implementation of $\bp(U,W)$ pushes a maximum flow in $H_A$ from $U$ that is limited by the excess flow in each vertex of $U$. Hence, after $\bp(U,W)$ is executed, for every $u \in U$ and $w \in W$, either there is no residual path from $u$ to $w$ in $H_{\fp}$ that is internally disjoint from $\Vx$, or $u$ is inactive.

As for property (2), since we did not change $\relabel$, $h$ remains valid after calls to $\relabel$.
It remains to show that $h$ remains a valid labeling after $\bp(U, W)$.
Consider two vertices $a,b \in \Vx$.
We will show that after $\bp(U, W)$, either the arc $(a,b)$ of $\Kx$ is saturated (i.e., is no residual path from $a$ to $b$ in $H_{\fp}$), or $h(a) \leq h(b) + 1$.
The flow pushed (in $H_A$) by the call $\bp(U,W)$ can be decomposed into a set $\mathcal P$ of flow paths, each of which starts at a vertex of $U$ and ends at a vertex of $W$.

Assume that after performing $\bp(U, W)$ there is an augmenting $a$-to-$b$ path, $Q$ in $H_{\fp}$.
If $Q$ does not intersect any path $P \in \mathcal{P}$ then $Q$ was residual also before $\bp(U, W)$ was called, so $h(a) \leq h(b) + 1$ because $h$ was a valid labeling before the call.
Otherwise, $Q$ intersects some path in $\mathcal{P}$.
Let $c,d$ be the first and last vertices of $Q$ that also belong to paths in $\mathcal P$.
Let $P,P' \in \mathcal P$ be paths such that $c \in P$ and $d \in P'$.
Let $w\in W$ be the last vertex of $P$ and let $u\in U$ be the first vertex of $P'$.
See Figure ~\ref{fig:cross} for an illustration.
Then, before $\bp(U, W)$ was called, $Q[a,c]\circ P[c,w]$ was a residual path from $a$ to $w$, and $P'[u,d]\circ Q[d,b]$ was a residual path from $u$ to $b$.
Since $h$ was a valid labeling before the call, we have
\[
h(u) \leq h(b) + 1 \text{\quad and\quad} h(a) \leq h(w) + 1.
\]
Since $h(u)=h(w) + 1$ it follows that
\[
h(a) \leq h(w)+1 = h(u) \leq h(b) + 1,
\]
showing property (2).

\begin{figure}
\begin{center}
\includegraphics[width=0.65\textwidth]{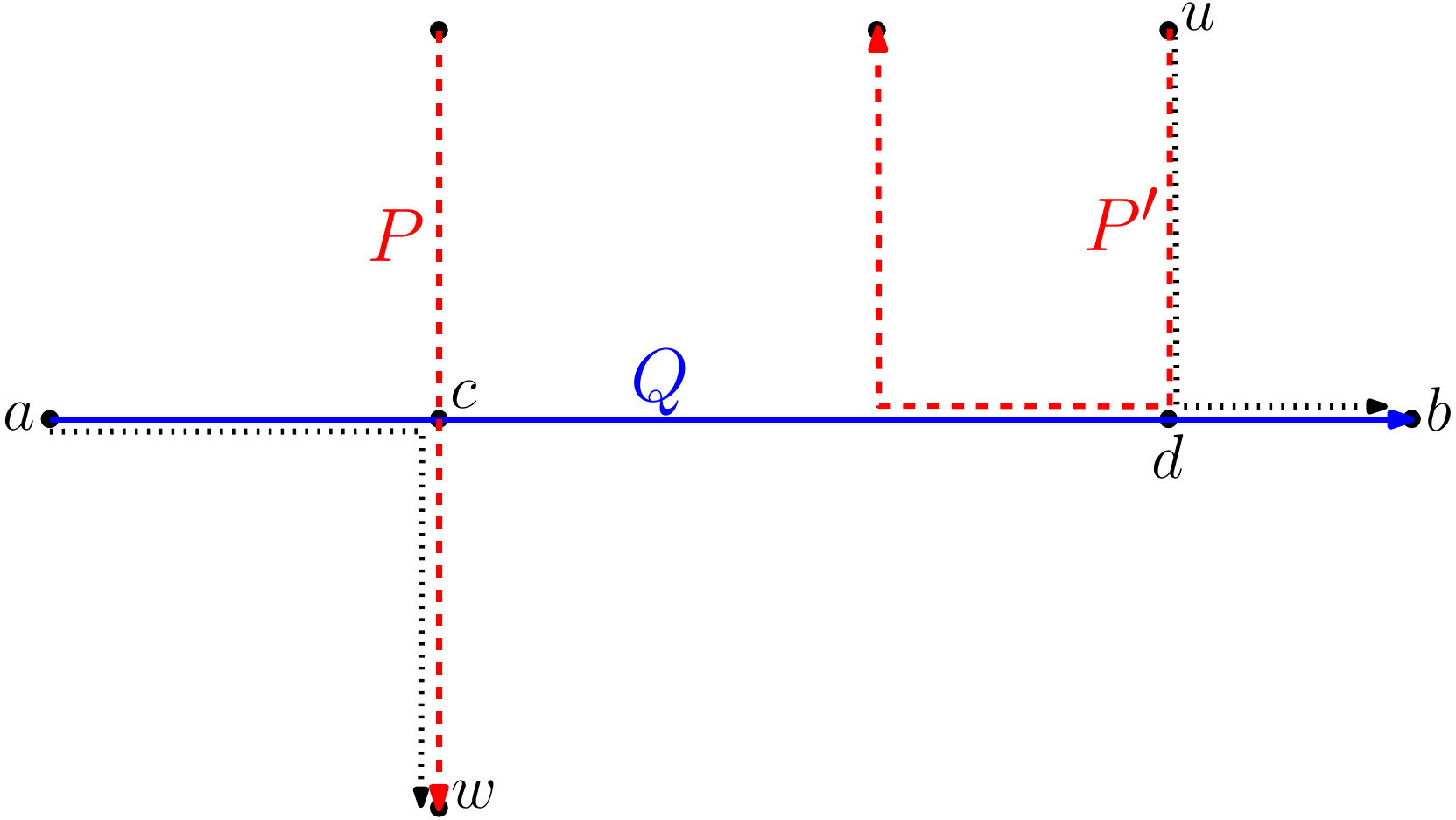}
\caption{Illustration of property (2) in the proof of Lemma~\ref{thm:apex}. A $\bp(U,W)$ operation pushes flow along paths $P$ and $P'$ (dashed red paths). If $Q$ (blue solid path) is residual after the $\bp$ operation then the dotted black paths were residual before.
\label{fig:cross}
}
\end{center}

\end{figure}

We have shown that properties (1)-(4) and (5*) hold. Hence, by Lemmas~\ref{lem:num-pulses} and~\ref{lem:num-relabel}, the algorithm terminates after performing $O(|\Vx|^2)=O(k^2)$ $\bp$ and $\relabel$ operations. Since each $\relabel$ takes $O(k)$ time, and each $\bp$ takes $\Omega(k)$ time, the total running time of the algorithm is $O(k^2 \cdot \Tbp)$.
By Lemma~\ref{lem:no-st-path}, when the algorithm terminates there is no residual path from $s$ to $t$ in $\Kx$.
By our definition of residual arcs of $\Kx$ this implies that there is no residual path from $s$ to $t$ in $H_\rho$, so $\fp$ is a maximum flow from $s$ to $t$ in $H$.
\end{proof}

\section{A faster algorithm for maximum flow with vertex capacities} \label{sec:main}

In this section, we give a faster algorithm for computing a maximum flow in a directed planar graph with integer arc and vertex capacities bounded by $C$, parameterized by the number $k$ of terminal vertices (sources and sinks).
The fastest algorithm currently known for this problem is by Wang~\cite{DBLP:conf/soda/Wang19}. It runs in
$O(k^5n\text{ polylog}(nC))$ time.
We first sketch Wang's algorithm. We only go into details in the parts of the algorithm that will be modified in our algorithm in Section~\ref{sec:gx}.

\subsection{Wang's algorithm}

Wang's algorithm uses the following two auxiliary graphs.
In both these graphs only the arcs are capacitated. Let $G$ be a planar network. Let $k$ be the total number of sources and sinks in $G$. Recall from Remark~\ref{rem:st} that we turn $G$ into a 2-apex flow network with a single super-source $s$ and super-sink $t$.
 
\begin{definition}[The graph $G^\circ$]
For a flow network $G$, the network $G^\circ$ is obtained by the following procedure. For each vertex $v \in V(G)$, replace $v$ with an undirected cycle $C_v$ with $d=\deg(v)$
vertices $v_1, ..., v_d$.\footnote{By undirected cycle we mean that there are directed arcs in both directions between every pair of consecutive vertices of the cycle $C_v$.}
Each arc in $C_v$ has capacity $c(v)/2$.
Connect each arc incident to $v$ with a different vertex $v_i$, preserving the
clockwise order of the arcs so that no new crossings are introduced.
\end{definition}

\begin{definition}[The graph $\Gx$]
Let $f$ be a flow in $G$. Let $X$ be the set of infeasible vertices,
i.e., vertices $x \in V(G)$ such that $f^{in}(x) > c(x)$.
The graph $\Gx$ is defined as follows.
Starting with $G^\circ$, for each vertex $ x \in X$, replace the cycle
representing $x$ with two vertices $x^{in}$, $x^{out}$ and an arc $\inout$ of
capacity $c(x)$.

Every arc of capacity $c > 0$ going from a vertex $u \notin C_x$ to a vertex in $C_x$ becomes an arc $(u,x^{in})$ of capacity $c$. Similarly, every arc of
capacity $c > 0$ going from a vertex of $C_x$ to a vertex $u \notin C_x$ becomes an arc $(x^{out},u)$ with capacity $c$.
\end{definition}

Note that even though $G - \{s, t\}$ is planar, $\Gx - \{s, t\}$ is not.
$\{x^{in} : x \in X\} \cup \{x^{out} : x \in X\} \cup \{s, t\}$ is an apex set in $\Gx$. Thus, $\Gx$
is a $(2|X|+2)$-apex graph.

Recall that if $H$ and $G$ are two graphs such that every arc of $G$ is also an arc of $H$, then the restriction of a flow $f'$ in $H$ to $G$ is a flow $f$ in $G$ such that $f(e)=f'(e)$ for all $e\in E(G)$.
Thus we can speak of the restriction of a flow $f^\circ$ in $G^\circ$, to a flow $f$ in $G$, and of the restriction of a flow $\fx$ in $\Gx$ to a flow $f$ in $G$.

Let $\lambda^*$ be the value of the maximum flow in $G$.
Wang's algorithm uses binary search to find $\lambda^*$.
Let $\lambda$ be the current candidate value for $\lambda^*$.
The algorithm computes a flow $f^\circ$ with value $\lambda$ in the graph $G^\circ$.
Let $f$ be the restriction of $f^\circ$ to $G$.
Wang proves that the set $X$ of infeasible vertices under $f$ has size at most $k-2$,
and that the sum of the violations of the vertices in $X$ is at most $(k-2)C$.
As long as $\vio(f)>2k$, the algorithm improves~$f$.
This improvement phase, which will be described shortly, is the crux of the algorithm.
If $\vio(f) \leq 2k$, then $O(k^2)$ iterations of the classical Ford-Fulkerson algorithm suffice to get rid of all the remaining violations.

The improvement phase of the algorithm is based on finding a circulation $g$
that cancels the violations on the infeasible vertices and does not create too much
violation on other vertices.
It can then be shown that  adding $1/k \cdot g$ to the flow
$f$ decreases $\vio(f)$ by a multiplicative factor of roughly $1-1/k$. 
After $O(k\log (kC))$
iterations of the improvement step, $\vio(f)$ is at most $2k$.

Wang proves that in order to find the circulation $g$, it suffices to compute a circulation $\gx$ in $\Gx$
that satisfies the following properties:
\begin{enumerate}
    \item $f^\times+g^\times$ is feasible in $G^\times$.
\item The restriction of $f^\times+g^\times$ to $G$ has no violations on vertices
of $X$.
\item The restriction of $f^\times+g^\times$ to $G$ has at most
$(k-2)\cdot\vio(f)$ violation on any vertex in $V(G) \setminus X$.
\end{enumerate}
The desired circulation $g$ is the restriction of $\gx$ to $G$.
If no such $\gx$ exists then $g$ does not exist, which implies that $\lambda > \lambda^*$.

Wang essentially shows that any algorithm for finding $\gx$ in $O(T)$ time, where $T=\Omega(n)$, yields an algorithm for maximum flow with vertex capacities in $O(kT\log(kC)\log(nC))$ time.
The additional terms stem from the $O(k\log (kC))$ iterations of the improvement step, and the $\log(nC)$ steps of the binary search.
Wang shows how to compute $\gx$
in $T=O(k^4n\log^3 n)$ time, by eliminating the violation at each vertex of $X$ one
after the other in an auxiliary graph obtained from $\Gx$.
Thus, the  overall running time of his algorithm is $O(k^5n\log^3n\log(kC)\log(nC))$.

\subsection{A faster algorithm for computing $\gx$}\label{sec:gx}
We propose a faster way of computing the circulation $\gx$ by eliminating the
violations in all the vertices of $X$ in a single shot.
Doing so correctly requires some care in defining the appropriate capacities in the auxiliary graph, since we only know that for each $x\in X$, $\gx$ should eliminate at least $\vio(f,x)$ units of flow from $x$, but the actual amount of flow eliminated from $x$ may have to be larger.
This issue does not come up when resolving the violations one vertex at a time as was done by Wang. 

Define an auxiliary graph $H$ as follows.
Starting with $\Gx_{\fx}$, the residual graph of $G^\times$ with respect to $f^\times$, 
\begin{itemize}
\item For each $x \in X$, set the capacity of the arc $\inout$ to be 0
and the
capacity of $\outin$ to be $c(x)$.
\item Add a super source $s'$ and arcs $(s',x^{in})$ with capacity $\vio(f,x)$ for every $x \in X$.
\item Add a super sink $t'$ and arcs $(x^{out},t')$ with capacity $\vio(f,x)$ for
every $x \in X$.
\end{itemize}
\noindent Note that $ \{s,t \} \cup \bigcup_{x\in X} \{x^{in},x^{out}\} \cup \{s',t' \}$ is an apex set of size $O(k)$ in $H$ (recall from Remark~\ref{rem:st} that $s$ and $t$ are the super source and super sink of the original graph $G$).

The circulation $\gx$ can be found using the following algorithm.
Find a maximum flow $h^\prime$ from $s'$ to $t'$ in $H$
using Lemma~\ref{thm:apex}.
Convert $h^\prime$ to an acyclic flow $h$ of the same value using the algorithm
of Sleator and Tarjan~\cite{DBLP:journals/jcss/SleatorT83} (cf.  \cite[Lemma 2.5]{DBLP:conf/soda/Wang19}).
If $h$ does not saturate every arc incident to $s'$ and $t'$, return that the desired circulation $g$ does not exist.
Otherwise,
$h$ can be extended to the desired circulation $g^\times$
by setting $g^\times\outin=h\outin + \vio(f,x)$ for every $x
\in X$ and $g^\times(e) = h(e)$ for all other arcs.

The following lemma shows that any single $\bp$ operation in the algorithm of Lemma~\ref{thm:apex} on $H$ can be implemented by a constant number of calls to the $O(n \log^3 n)$-time multiple-source multiple-sink maximum flow algorithm in planar graphs of Borradaile et al.~\cite{DBLP:journals/siamcomp/BorradaileKMNW17}. 
There are two challenges that need to be overcome.
First, the graph $H$ is $O(k)$-apex graph rather than planar.
Second, the algorithm of Borradaile et al. computes a maximum flow from multiple sources to multiple sinks, 
not a maximum flow under the restriction that each source sends at most some given limit. 
This is not a problem in the case of a single source, or a limit on just the total value of the flow, since then some of the flow pushed can be "undone". When each of the multiple sources has a different limit, undoing the flow from one source can create residual paths from another source that did not yet reach its limit.

\begin{lemma}\label{lem:time}
Any single $\bp$ operation in the execution of the algorithm of Lemma~\ref{thm:apex} on the graph $H$ defined above can be implemented in $O(n \log^3 n)$ time.
\end{lemma}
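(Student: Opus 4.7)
The plan is to implement each $\bp(U,W)$ operation via a constant number of calls to Borradaile et al.'s $O(n \log^3 n)$-time MSMS max flow algorithm on a carefully constructed planar auxiliary graph. The first step is to recast the restricted max flow problem---max flow in $H_A$ from $U$ to $W$ subject to each $u \in U$ sending at most $\ex(u)$ units---as a standard max $s^*$-$t^*$ flow by introducing a super-source $s^*$ with arcs $(s^*,u)$ of capacity $\ex(u)$ for $u \in U$ and a super-sink $t^*$ with infinite-capacity arcs from each $w \in W$. To avoid making $s^*$ and $t^*$ global apices (which would interfere with planarity), I will instantiate them as \emph{local} bottleneck gadgets embedded in individual faces of the planar part of $H_A$.

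The planarization exploits the following structural property of $H$. The subgraph induced by $V_0 = V(H) \setminus \Vx$ is planar with an embedding inherited from $\Gx$. Crucially, for every $x \in X$ both $x^{in}$ and $x^{out}$ have all their $V_0$-neighbors on the boundary of a single face $F_x$---the face of $V_0$'s embedding that housed the cycle $C_x$ in $G^\circ$---while the apices $s',t'$ have no $V_0$-neighbors at all, since their only arcs are to other apices in $\Vx$. For each apex $a \in A$ I split $a$ into one copy per $V_0$-neighbor, placing each copy on the boundary of $F_a$. To enforce the per-source cap on $u \in U$ entirely locally, I add inside $F_u$ a new vertex $s^*_u$ together with a bottleneck vertex $b_u$ connected by an arc of capacity $\ex(u)$, and infinite-capacity arcs from $b_u$ to every copy of $u$; this forces every unit of flow originating at $u$'s copies to pass through the capacity-$\ex(u)$ arc $(s^*_u,b_u)$. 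Analogous local super-sink gadgets handle each $w \in W$, and apex-to-apex arcs of $H_A$ whose apex endpoints share a face are drawn within that face.

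Applying Borradaile et al.'s MSMS algorithm to the resulting planar graph with sources $\{s^*_u : u \in U\}$ and sinks $\{t^*_w : w \in W\}$ produces a max flow in $O(n \log^3 n)$ time. By construction each source $s^*_u$ contributes at most $\ex(u)$ units, so the per-source constraint is satisfied automatically---sidestepping exactly the ``undoing'' difficulty flagged in the discussion preceding the lemma. Translating the planar flow back to $H_A$---by summing flow values across copies that came from splitting a single apex---yields a feasible restricted max flow from $U$ to $W$, establishing property~(5*).

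The hard part will be verifying planarity rigorously, especially for the apices $s,t$ inherited from $G$ (whose $V_0$-neighbors may not all lie in a single face of $V_0$'s embedding) and for apex-to-apex arcs of $H_A$ whose apex endpoints sit in different faces. I expect these cases to be handled by a constant number of additional MSMS calls, each resolving one pattern of non-planar inter-face interaction (for instance, one call that routes the arcs incident to $s$ through a separately planarized subgraph), with the resulting sub-flows combined into the final flow. This keeps the total running time at $O(n \log^3 n)$.
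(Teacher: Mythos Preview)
Your overall plan---planarize $H_A$ by splitting apices into per-neighbor copies, enforce the per-source excess bound via local bottleneck gadgets, and invoke the MSMS planar max-flow algorithm a constant number of times---is the same high-level strategy the paper follows, and you have correctly anticipated that $s$ and $t$ need separate handling because their $V_0$-neighbors do not lie on a single face. The genuine gap is a planarity obstruction you did \emph{not} flag: the interaction of $x^{in}$ and $x^{out}$ inside their common face~$F_x$.

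The $V_0$-neighbors of $x^{in}$ and those of $x^{out}$ are, in general, \emph{interleaved} in the cyclic order around $\partial F_x$; this interleaving is precisely why $\Gx$ fails to be planar (arcs to $x^{in}$ cross arcs to $x^{out}$). Splitting both apices into per-neighbor copies does remove those crossings, but the moment you place a center vertex $b_{x^{in}}$ in the interior of $F_x$ with arcs to all copies of $x^{in}$, you have reinstated a single hub playing the role of $x^{in}$. If $x^{out}$ also needs a bottleneck center $b_{x^{out}}$---which happens whenever both $x^{in}$ and $x^{out}$ lie in $U$, a configuration the batch-highest-distance run does not preclude---then the two families of radial arcs to interleaved boundary copies must cross, and the construction is not planar. (Your ``analogous local super-sink gadgets'' for $w\in W$ create the same problem whenever $\{x^{in},x^{out}\}$ meets both $U$ and $W$; in fact sink gadgets are unnecessary, since sinks carry no per-vertex bound, and the paper simply uses the split copies of each $w\in W$ directly as sinks.) Your closing paragraph lists $s,t$ and inter-face apex--apex arcs as the hard cases, but it does not identify this intra-face interleaving, which is the actual crux.

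The paper's resolution is to separate the sources by \emph{type} and process them sequentially in the residual graph: one MSMS call for $s$ alone (split $s$, push back any flow exceeding $\ex(s)$), one for $t$ alone, one for $U^{in}=U\cap\{x^{in}:x\in X\}$, and one for $U^{out}=U\cap\{x^{out}:x\in X\}$. In the $U^{in}$ call every $x^{out}$ is either absent or already split into harmless pendant sink copies, so each $x^{in}$ can sit uncontested inside $F_x$ with a single bottleneck arc of capacity $\ex(\rho,x^{in})$; the $U^{out}$ call is symmetric. This is exactly the refinement your proposal is missing.
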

\begin{proof}
Let $\Vx = \{s,t \} \cup \bigcup_{x\in X} \{x^{in},x^{out}\} \cup \{s',t' \}$ be the set of apices of $H$.
Recall that the algorithm of Lemma~\ref{thm:apex} invokes the batch-highest-distance $\pr$ algorithm on a complete graph $\Kx$ over $\Vx$, and maintains a corresponding preflow in $H$. 
Consider a single $\bp(U, W)$ operation from a set of apices $U$ to a set of apices $W$. 
Let $\rho$ denote the preflow pushed in $H$ up to this $\bp$ operation.
Let $A = U \cup W$. 
To correctly implement $\bp(U,W)$, we find a flow $\rho'$
with sources $U$ and sinks $W$ in the graph $H$, which satisfies the following properties: 
\begin{enumerate}[(i)]
	\item For every $u \in U$, $\ex(\rho+\rho',u)\geq 0$, and 
	\item For every $u\in U$ and $w \in W$, either $\ex(\rho+\rho',u)=0$ or there is no residual path in $H_{\rho+\rho'}$ from $u$ to $w$ that is internally disjoint from $\Vx$.	
\end{enumerate}
 Condition (i) guarantees that $\rho'$ does not push more flow from a vertex $u \in U$ than the current excess of $u$. Condition (ii) is condition (5*) from Section~\ref{sec:bp}. 

Let ${H''}$ be the graph obtained from $H_\rho$ by deleting the vertices $\Vx \setminus A$. Note that the absence of residual paths that are internally disjoint from $\Vx$ in ${H''}$ is equivalent to the absence of such paths in $H$.
We will compute $\rho'$ using a constant number of invocations of the $O(n \log^3 n)$-time multiple-source multiple-sink maximum flow algorithm in planar graphs of Borradaile et al.~\cite{DBLP:journals/siamcomp/BorradaileKMNW17}. 
Instead of invoking this algorithm on ${H''}$, which is not planar, we shall invoke it on modified versions of ${H''}$ which are planar.

Starting with $H''$, we split each vertex $w \in W$ into $\deg(w)$ copies. Each arc $e$ that was incident to $w$ before the split is now incident to a distinct copy of $w$, and is embedded so that it does not cross any other arc in the graph.
Let ${H'}$ denote the resulting graph, and
let $W'$ denote the set of vertices created as a result of splitting all the vertices of $W$. 

\begin{figure}
\begin{center}
\includegraphics[width=1\textwidth]{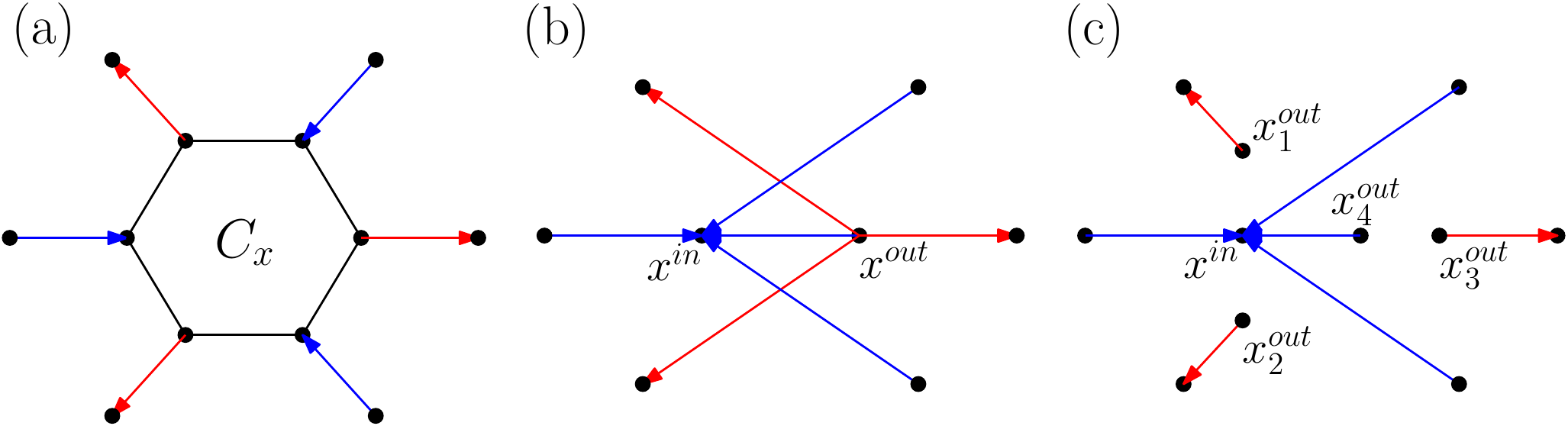}
\caption{Illustration of the auxiliary graphs used in the algorithm of Lemma~\ref{lem:time}. Only a portion of the graphs around some vertex $x \in X$ is shown. $(a)$ the graph $G^\circ$. $(b)$
the graph $H$. Note that the only crossings are between arcs incident to $\xin$ and arcs incident to $\xo$. $(c)$ the graph $H'$ in the case that $\xo$ belongs to $W$. $\xo$ is split into multiple copies, eliminating all arc crossings.}
\end{center}
\end{figure}

The set $W'$ replaces $W$ as the set of sinks of the flow $\rho'$ we need to compute. 
Note that $U$ is an apex set in ${H'}$.  
We then build the flow $\rho'$ gradually, by computing the following steps, each using a single invocation of the multiple-source multiple-sink maximum flow algorithm of Borradaile et al.
 in $O(n \log^3 n)$ time.
In what follows, when we say that the flow $\rho'$ satisfies condition (ii) for a subset $U'$ of $U$ we mean that for every $u\in U'$ and $w \in W$, either $\ex(\rho+\rho',u)=0$ or there is no residual path in $H_{\rho+\rho'}$ from $u$ to $w$ that is internally disjoint from $\Vx$.
\begin{enumerate}[(1)]
  \item \label{step:s} If $s \in U$, starting with ${H'}$, we split $s$ into $\deg(s)$ copies. 
    Each arc $e$ that was incident to $s$ before the split is now incident to a distinct copy of $s$, and is embedded so that it does not cross any other arc in the graph.
We also delete all other vertices of $U$. 
We compute in the resulting graph, which is planar, a maximum flow with sources the copies of $s$ and the sinks $W^\prime$.
Let $\rho'_s$ be the flow computed. 
If $|\rho'_s|>\ex(\rho,s)$, we decrease $|\rho'_s|$ by pushing $|\rho'_s|-\ex(\rho,s)$ units of flow back from $W'$ to the copies of $s$. 
This can be done in $O(n)$ time in reverse topological order w.r.t. $\rho'_s$ (cf.~\cite[Section 1.4]{DBLP:journals/siamcomp/BorradaileKMNW17}). 
We view $\rho'_s$ as a flow in $H$, and set $\rho' = \rho'_s$.
By construction $\rho'$ satisfies condition (i), and satisfies condition (ii) for the subset $\{s\}$.
    \item If $t \in U$, starting with $H_{\rho'}'$, we repeat step (1) with $t$ taking the role of $s$ to compute a flow $\rho'_t$.
Set $\rho' \leftarrow \rho' + \rho'_t$ 
By construction of $\rho'_t$, $\rho'$ now satisfies satisfies condition (i), and satisfies condition (ii) for the subset $U \cap \{s,t\}$.
    \item \label{step:in} Let $U^{in}$ be the set $U \cap \{\xin : x \in X\}$. 
If $U^{in} \neq \emptyset$, starting with ${{H'}}_{\rho'}$, we delete all the vertices of $U$ that are not in $U^{in}$. 
Note that, since the resulting graph does not contain $s,t,s',t'$, nor any $\xo$ for any $x \in X$, 
and since arcs incident to $\xin$ only cross those incident to $\xo$, the resulting graph is planar. 
For every $\xin \in U^{in}$ we add a vertex $x'$ and an arc $(x',\xin)$ with capacity $\ex(\rho,\xin)$. 
The resulting graph is still planar. 
We compute a maximum flow $\rho'_{in}$ with sources $\{x' : \xin \in U^{in}\}$ and sinks $W^\prime$.
We view $\rho'_{in}$ as a flow in $H$, and set $\rho' \leftarrow \rho' + \rho'_{in}$. 
By construction of $\rho'_{in}$, $\rho'$ now satisfies condition (i), and satisfies condition (ii) for the subset $U \cap (\{s,t\} \cup U^{in})$.
    \item We repeat step (\ref{step:in}) with $out$ taking the role of $in$ to compute a flow $\rho'_{out}$. 
By construction of $\rho'_{in}$, $\rho'$ now satisfies condition (i), and satisfies condition (ii) for $U \cap (\{s,t\} \cup U^{in} \cup U^{out})$.
\end{enumerate}
Since $s'$ and $t'$ are the source and sink of the flow computed by the $\pr$ algorithm, they are
never active vertices, so they never belong to $U$. Hence $\{s,t\} \cup U^{in} \cup U^{out}
\supseteq U$, and conditions (i) and (ii) are fully satisfied by $\rho'$. 
\end{proof}

Combining Lemma~\ref{lem:time} and Lemma~\ref{thm:apex}, we get the following lemma. 
\begin{lemma}\label{lem:main}
The algorithm described above finds a circulation $g^\times$ such that
\begin{enumerate}
    \item $f^\times+g^\times$ is feasible in $G^\times$.
\item The restriction of $f^\times+g^\times$ to $G$ has no violations at vertices
of $X$.
\item The restriction of $f^\times+g^\times$ to $G$ has violation at most
$(k-2)\cdot\vio(f)$ at any vertex in $V(G) \setminus X$.
\end{enumerate}
in $O(k^2 n \log^3 n)$ time if such a circulation exists.
\end{lemma}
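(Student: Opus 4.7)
The plan is to derive the running time directly from Lemmas~\ref{thm:apex} and~\ref{lem:time}, and to establish the three structural properties by adapting Wang's~\cite{DBLP:conf/soda/Wang19} correctness argument from one-vertex-at-a-time elimination to our batched construction. For the running time, observe that $\{s,t,s',t'\}\cup\bigcup_{x\in X}\{\xin,\xo\}$ is an apex set of $H$ of size $O(k)$; Lemma~\ref{thm:apex} together with the bound $\Tbp=O(n\log^3 n)$ from Lemma~\ref{lem:time} computes the maximum flow $h'$ in $O(k^2 n\log^3 n)$ time, and the Sleator--Tarjan acyclicity conversion, saturation check, and extension to $\gx$ contribute only lower-order overhead.

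First I would check that $\gx$ is a circulation. Conservation holds automatically at every vertex outside $\bigcup_{x\in X}\{\xin,\xo\}$ because there $\gx = h$ and $h$ is a flow whose only imbalances are at $s',t'$. At each $\xin$, the saturation of $(s',\xin)$ means $h$ delivers exactly $\vio(f,x)$ units from $s'$; removing $s'$ from consideration while adding $\vio(f,x)$ units on $(\xo,\xin)$ substitutes an identical inflow from $\xo$, preserving conservation, and the analogous rewrite works at $\xo$. For property~(1), the feasibility of $\fx+\gx$, I would lean on the fact that $h$ is feasible in $H$ and that for every arc $e\notin\{(\xin,\xo),(\xo,\xin):x\in X\}$ the capacity in $H$ equals the residual capacity $c_{\fx}(e)$, so the standard preflow-addition bookkeeping gives $(\fx+\gx)(e)\leq c(e)$. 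For the remaining ``internal'' arcs the check reduces to verifying that the saturating level of $\fx(\xin,\xo)$ together with the feasibility of $h$ in $H$ suffices to absorb the $+\vio(f,x)$ term added to $\gx(\xo,\xin)$.

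Property~(2) is then the most direct consequence of the saturation check: because $h$ saturates every $(s',\xin)$ arc, the restriction of $\fx+\gx$ to $G$ decreases the $f$-inflow at each $x\in X$ by exactly $\vio(f,x)$, eliminating the violation. The hardest step will be property~(3). I expect to argue along the lines of Wang: each $x\in X$ can reshuffle at most $\vio(f,x)\leq \vio(f)$ units of flow, since the cut separating $s'$ from the rest of $H$ has capacity $\sum_{x\in X}\vio(f,x)$ and each individual $(s',\xin)$ has capacity $\vio(f,x)$; any fixed $v\in V(G)\setminus X$ can lie on the corrective rerouting for at most $|X|\le k-2$ distinct apices of $X$, contributing at most $\vio(f)$ to its inflow per apex. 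Summing gives the $(k-2)\vio(f)$ bound. The subtle point relative to Wang's sequential elimination is that our procedure handles all $x\in X$ simultaneously, so I would need to rule out that concurrent reroutings compound; the acyclicity of $h$ obtained from the Sleator--Tarjan conversion lets me decompose $h$ into simple paths, each attributable to a single $x\in X$, which makes the per-$x$ accounting go through additively and completes the argument.
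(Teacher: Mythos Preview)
Your running-time analysis and your treatment of properties~(1) and~(2) follow the same line as the paper (modulo a small slip: saturation of $(s',\xin)$ only guarantees that the $f$-inflow at $x$ drops by \emph{at least} $\vio(f,x)$, not exactly; the extra $h\outin$ units also vanish, and you should check separately that no new inflow is created on the $\xo$ side).

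There is, however, a genuine gap. The lemma asserts the algorithm succeeds \emph{if such a circulation exists}, but you only argue the forward direction: whenever the algorithm produces a $\gx$, it has the three properties. You never show the converse, namely that if some $\gx$ satisfying (1)--(2) exists then the maximum flow $h$ in $H$ must saturate all arcs incident to $s'$ and $t'$, so the algorithm will not falsely report failure. The paper handles this by an explicit ($\Leftarrow$) construction: from a hypothetical $\gx$ it builds a feasible saturating flow in $H$ by setting $h\outin=\gx\outin-\vio(f,x)$ and $h(s',\xin)=h(\xo,t')=\vio(f,x)$, then checking $0\le h\outin\le c(x)$. Without this step your proof does not establish the ``if'' in the lemma.

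For property~(3) your per-apex path-decomposition works, but it is substantially more involved than necessary. The paper's argument is a one-liner: the value of $h$ is $\sum_{x\in X}\vio(f,x)\le(k-2)\vio(f)$, and because $h$ is acyclic every vertex satisfies $h^{in}(v)\le|h|$; since $f^{in}(v)\le c(v)$ for $v\notin X$ and $(\gx)^{in}(v)=h^{in}(v)$ there, the violation bound follows immediately. Your worry that ``concurrent reroutings compound'' is a red herring: the global bound $h^{in}(v)\le|h|$ already absorbs all interactions, with no need to attribute paths to individual apices.
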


\begin{proof}  
We first analyze the running time. 
Computing the graph $H$ can be done in $O(n)$ time.
Computing the flow $h^\prime$ in $H$ using the algorithm of Lemma~\ref{thm:apex} takes $O(k^2 \cdot \Tbp)$ time.
By Lemma~\ref{lem:time}, $\Tbp = O(n \log^3 n)$ for the graph $H$.
Transforming $h^\prime$ to an acyclic flow $h$ using the algorithm
of Sleator and Tarjan~\cite{DBLP:journals/jcss/SleatorT83} takes $O(n\log n)$ time. Finally, computing $g^\times$ from $h$ takes $O(n)$ time.
Hence, the total time to compute $\gx$ is $O(k^2 n \log^3 n)$.  

In order to prove the correctness of the algorithm, we will  first prove that 
there exists a feasible flow $h$ in $H$ that saturates every arc incident to $s'$ and $t'$ if and only if 
there exists a circulation $\gx$ in $\Gx$ that satisfies conditions (1) and (2) in the statement of the lemma.

($\Leftarrow$) Assume the circulation $\gx$ exists in $\Gx$.
Define a flow $h$ in $H$ as follows. For every arc $e \in E(H)$ not of the form $\outin$ set $h(e)=\gx(e)$.
For every $x \in X$, set $h\outin=\gx\outin-\vio(f,x)$, $h(s,x^{in})=\vio(f,x)$ and $h(x^{out}, t)=\vio(f,x)$.
Since the restriction of $\fx +\gx$ to $G$ has no violations on the vertices of $x$, $\gx \outin \geq \vio(f,x)$, so $h\outin \geq 0$ and $h$ is a well defined flow.
By definition, the flow $h$ saturates every arc incident to $s'$ and $t'$.

To show that $h$ is feasible in $H$
it is enough to show that $h\outin \leq c(x)$ for every $x \in X$ (on all other arcs $h$ is feasible because $\gx$ is feasible in $\Gx_{f^\times}$).
Let $x \in X$.
Since $\fx+\gx$ is feasible in $\Gx$, $\gx\outin \leq \fx\inout$.
Since $h\outin = \gx\outin - \vio(f,x)$, $h\outin \leq \fx\inout - \vio(f,x) = c(x)$.

($\Rightarrow$) Assume there exist a feasible flow $h$ in $H$ that saturates every arc incident to $s'$ and $t'$, and let $\gx$ be the circulation obtained from $h$ as described above.
We show that $\fx+g^\times$ is feasible in $\Gx$.
On all arcs $e$ not of the form $\outin$, $\gx(e) = h(e)$ and the capacity of $e$ in $\Gx_{\fx}$ equals the capacity of $e$ in $H$. Therefore, since $h$ is a feasible flow in $H$, $\gx$ is feasible on $e$ in $\Gx_{\fx}$, so $\fx + \gx$ is feasible on $e$ in $\Gx$.
We now focus on the arcs $(x^{out}, x^{in})$ for each
$x \in X$.
Let $e = \outin$. Observe that $0 \leq h(e) \leq c(x)$.
Since $g^\times(e)=h(e)+\vio(f,x)$ we have that $\vio(f,x) \leq g^\times(e) \leq c(x) + \vio(f,x) =
f^\times(e)$. Since $(f^\times+\gx)\inout = f^\times\inout - \gx\outin$, we have $0 \leq (f^\times+\gx)\inout \leq c(x)$, so  $f^\times + \gx$ is feasible in $\Gx$.

To finish proving the ($\Rightarrow$) direction, we show that the restriction of $f^\times + g^\times$ to $G$ has no violations
on the vertices of $X$.
By definition of $\Gx$ and of residual graph, the only arcs in $G^\times_{\fx}$ that  can carry flow out of
$x^{in}$ are the reverses of the arcs that carry flow into $x$ in $f$, and the
only arcs  that can carry flow into $x^{out}$ are the reverses of the arcs that
carry flow out of $x$ in $f$.
We will show that $(f+g)^{in}(x) \leq c(x)$ by considering separately the
contribution of the flow on arcs of $G$ that in $\Gx$ are incident to
$x^{out}$, and arcs of $G$ that in $\Gx$ are incident to $x^{in}$.

The only arc of $f^\times$ that carries flow into $x^{out}$ is
$\inout$.
Thus, there is no arc $e$ of $G$ such that $f^\times (e)$ carries flow into $x^{out}$.
Since $\gx$ only carries flow into $x^{out}$ along the reverses of
arcs that carry flow out of $x^{out}$ in $f^\times$ and since for every such arc $e'$, $\gx(e') \leq \fx(\rev(e'))$, there is also no arc $e$ of $G$ such that $(f^\times+\gx) (e)$ carries flow into $x^{out}$.

The total flow that $f^\times$ carries into $x^{in}$ is $c(x) + \vio(f,x)$. Let $z$ denote the total amount of flow that $\gx$ carries into $x^{in}$. Since the only arc incident to
$x^{in}$ that carries flow in $\gx$ and does not belong to $G$
is $\outin$, the total amount of flow that $\gx$ carries into $x^{in}$ on arcs that belong to $G$ is $z - \gx\outin$. On the other hand, $\gx$ carries $z$ units of flow out of $x^{in}$, and all of this flow is pushed along the reverses of
arcs that carry flow into $x^{in}$ in $f^\times$ (and also belong to $G$). Hence, the total amount of flow that $f^\times + \gx$ carries into $x^{in}$ on arcs that belong to $G$ is $c(x) + \vio(f,x) + (z - \gx\outin) - z$.
Therefore,
\begin{eqnarray*}
	(f+g)^{in}(x) & = &  c(x) + \vio(f,x) - \gx\outin) \\
	& = & c(x) + \vio(f,x) - (h\outin + \vio(f,x)) \\
	& = & c(x) - h\outin \\
	& \leq & c(x).
\end{eqnarray*}

We have thus shown that the algorithm computes a flow $\gx$ satisfying conditions (1) and (2) in the statement of the lemma.
To see that condition (3) is also satisfied, note that the value of the flow $h$ is $\sum_{x\in X} \vio(f,x) \leq (k-2) \cdot \vio(f)$.
Since $h$ is acyclic, $h^{in}(v) \leq (k-2)\cdot \vio(f)$ for all $v\in H$.
Since for all $v \in V(G) \setminus X$, $f^{in}(v) \leq c(v)$, and $h^{in}(v) =
(g^\times)^{in}(v)$, it follows that the violation of $f^\times + g^\times$ at $v$ is at most $(k-2) \cdot \vio(f)$.

\end{proof}

Using the $O(k^2 n\log^3 n)$-time algorithm of Lemma~\ref{lem:main} in the improvement phase of Wang's algorithm instead of using Wang's $O(k^4 n\log^3 n)$-time procedure for this phase results in a running time of $O(k^3 n\text{ polylog}(nC))$ for finding a maximum flow in $G$.

\subsection{Alternative algorithm for~\(k = o(\log^2 n)\)} \label{app:small_k}

We now provide an alternative algorithm to the one given in the previous section that is faster
for small~\(k = o(\log^2 n)\).
Specifically, we describe an algorithm for computing the circulation~\(\gx\) that runs in~\(O(k^3 n
\log n)\) time instead of the~\(O(k^2 n \log^3 n)\) time required by the algorithm of
Lemma~\ref{lem:main}.
The final running time for computing a maximum flow with integer arc and vertex capacities is
therefore \(O(k^4 n \log n \log(kC) \log(nC))\).

We use the same auxiliary graph~\(H\) as defined above and again compute a maximum flow~\(h'\)
from~\(s'\) to~\(t'\) in~\(H\).
Let $\Vx = \{s,t \} \cup \bigcup_{x\in X} \{x^{in},x^{out}\} \cup \{s',t' \} \cup S \cup T$ be the
set of apices of $H$ \emph{along with the original sources \(S\) and sinks \(T\) of \(G\)}, and let
\(\Kx\) be the complete graph on \(\Vx\).
Instead of using the batch-highest-distance \(\pr\) algorithm as in Lemma~\ref{thm:apex}, we more
directly follow the strategy of Borradaile et al.~\cite{DBLP:journals/siamcomp/BorradaileKMNW17} by
simulating a maximum flow computation from \(s\) to \(t\) in \(\Kx\) using the FIFO
\(\pr\) algorithm.
We do not wish to directly use the multiple-source multiple-sink flow algorithm of Borradaile et
al.~\cite{DBLP:journals/siamcomp/BorradaileKMNW17}, because then each of the \(O(k^3)\) \(\push\)
operations would take~\(O(n \log^3 n)\) time.
But as above, we may take advantage of the structure of~\(H\) to perform each \(\push\) operation
more quickly.

\begin{lemma}\label{lem:no_separators-time}
Any single (individual arc) $\push$ operation in the graph $H$ defined above can be implemented in
$O(n \log n)$ time.
\end{lemma}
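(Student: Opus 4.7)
The plan is to reduce each single-arc $\push(u,v)$ operation in $\Kx$ to a planar max flow computation with all sources and sinks on a single face, and then invoke the $O(n \log n)$-time algorithm of Miller and Naor~\cite{DBLP:journals/siamcomp/MillerN95}.

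First, following the Borradaile et al.\ framework, implementing $\push(u,v)$ amounts to computing a max flow from $u$ to $v$ in $H_\rho[V_0 \cup \{u,v\}]$, capped by $\min\{\ex(\rho,u),\ c_\rho(u,v)\}$. I would split $u$ and $v$ into copies, one per incident residual arc, in the cyclic order around each apex so that no new crossings are introduced. Because $H - \Vx$ is planar and we are re-inserting only $u$ and $v$ in split form, the resulting graph $H'$ is planar, with sources the copies of $u$ and sinks the copies of $v$.

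The crucial structural claim I would then establish is that in the planar embedding of $H'$, every copy of $u$ and every copy of $v$ lies on the boundary of a single common face. This is the place where including $S\cup T$ in $\Vx$ matters: in $G^\circ$ each vertex of $G$ was exploded into a cycle, and for every $v \in X \cup S \cup T \cup \{s,t\}$ the corresponding cycle is either replaced by apex vertices (as in $\Gx$) or removed when we pass to $V_0$. I would argue that the ``holes'' these removals leave in the planar embedding of $H - \Vx$, together with the apex adjacencies in $H$ --- $s$ to every vertex of $S$, every vertex of $T$ to $t$, $s'$ to every $\xin$, and every $\xo$ to $t'$ --- chain these holes into a single face whose boundary in $H - \Vx$ contains the $V_0$-neighborhood of every apex. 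Once this is in place, splitting $u$ and $v$ places all their copies on this common face, and Miller--Naor's algorithm computes the desired max flow in $O(n\log n)$ time.

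Finally, to enforce the cap $\ex(\rho,u)$ on the amount of flow leaving $u$, I would first compute the unconstrained max flow in $H'$ with Miller--Naor, and if its value exceeds $\ex(\rho,u)$, reduce it by pushing back the surplus along a reverse topological order of the flow paths, exactly as done in step~(1) of the proof of Lemma~\ref{lem:time}; this takes $O(n)$ additional time. The main obstacle is the structural claim in the previous paragraph: one must verify, for every pair of apices $u,v \in \Vx$ that can arise as the endpoints of a FIFO push on $\Kx$, that the planar embedding of $H'$ really does place the split copies of $u$ and $v$ on a single face. This requires a careful face-by-face analysis of how removing the exploded cycles of $X \cup S \cup T$, together with the apices $s, t, s', t'$, merges the interior vertex-faces with the surrounding regions into one common outer face in $H - \Vx$.
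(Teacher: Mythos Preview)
Your central structural claim is false, and this is a genuine gap. Consider a push between $u=x_1^{in}$ and $v=x_2^{in}$ for two distinct $x_1,x_2\in X$ that are far apart in $G$ (non-adjacent, no common neighbors, and with no sources or sinks nearby). In $H-\Vx$ the holes left by removing the cycles $C_{x_1}$ and $C_{x_2}$ are two separate faces of the planar remainder; nothing merges them. Your ``chaining'' argument relies on the apex adjacencies $(s',\xin)$, $(\xo,t')$, $(s,s_i)$, $(t_i,t)$, but these arcs are incident to apices and are removed along with $s,t,s',t'$ when you pass to $H-\Vx$; they impose no constraint on the planar embedding of the remaining graph and cannot merge the holes. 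Hence the split copies of $u$ and $v$ do not lie on a common face, and Miller--Naor does not apply.

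The paper avoids this by a case analysis rather than a single uniform reduction. If one of $u,v$ is $s,t,s'$ or $t'$, the only connection to the rest of $H$ is a single direct arc, so the push is a constant-time arc saturation. If $u\in\{x_1^{in},x_1^{out}\}$ and $v\in\{x_2^{in},x_2^{out}\}$ for \emph{distinct} $x_1,x_2$, then after deleting $\Vx\setminus\{u,v\}$ the graph is already planar \emph{without} splitting: only one of $x_i^{in},x_i^{out}$ survives at each $x_i$, so it can be embedded inside the hole left by $C_{x_i}$ with no crossings. The paper then uses the $O(n\log n)$ single-source single-sink planar max-flow algorithm of Borradaile and Klein (with an auxiliary source $u'$ and arc $(u',u)$ of capacity $\ex(\rho,u)$ to enforce the cap). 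Miller--Naor is invoked only in the remaining case $u,v\in\{x^{in},x^{out}\}$ for the \emph{same} $x$, where splitting both does place all copies on the single face that replaced $C_x$. Your plan would work if you adopted this case split; the uniform common-face claim does not hold.
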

\begin{proof}
Consider a single $\push(u, v)$ operation where \(u,v \in \Vx\). 
Let $\rho$ denote the preflow pushed in $H$ by the FIFO $\pr$ algorithm up to this
$\push$ operation.
We find a flow \(\rho'\) with source \(u\) and sink \(v\) in the graph \(H\) such that either
\(\ex(\rho + \rho', u) = 0\) or \(\ex(\rho + \rho', u) > 0\) and there is no residual path in
\(H_{\rho + \rho'}\) from \(u\) to \(v\) that is internally disjoint from \(\Vx\).

Let \({H'}\) be the graph obtained from \(H_\rho\) by deleting the vertices \(\Vx \setminus \{u,
v\}\).
Instead of invoking the \(O(n \log^3 n)\)-time multiple-source multiple-sink maximum flow algorithm
of Borradaile et al.~\cite{DBLP:journals/siamcomp/BorradaileKMNW17}, we will compute \(\rho'\) as
follows.
As before, we must consider a few different cases.
\begin{itemize}
  \item
    If \(u = s\) or \(v = s\), then \(v \in S\) or \(u \in S\), respectively.
    We push up to \(\ex(\rho, u)\) units of flow directly along the arc \((u, v)\) in constant time,
    either saturating the arc or reducing the excess flow in \(u\) to \(0\).
    We may similarly push directly along the arc \((u, v)\) in constant time if one of \(u\) or
    \(v\) is one of \(t\), \(s'\), or \(t'\) instead.
  \item
    If \(u \in \{x_1^{in}, x_1^{out}\}\) and \(v \in \{x_2^{in}, x_2^{out}\}\) for two distinct
    vertices \(x_1, x_2 \in X\), then the graph \({H'}\) is planar.
    We add a vertex \(u'\) and an arc \((u', u)\) with capacity \(\ex(\rho, u)\) and compute the
    maximum flow \(\rho'\) with source \(u'\) and sink \(v\) using the single-source single-sink
    maximum flow algorithm in planar graphs of Borradaile and
    Klein~\cite{DBLP:journals/jacm/BorradaileK09}.
  \item
    If neither of the above cases apply, then \(u,v \in \{x^{in}, x^{out}\}\) for some \(x \in X\).
    If arc \((u, v)\) has positive residual capacity, we push up to \(\ex(\rho, u)\) units of flow
    directly along it in constant time.
    Similar to Step~\ref{step:s} in the proof of Lemma~\ref{lem:time}, starting with \({H'}\), we
    split \(u\) into \(\deg(u)\) copies so that each arc that was incident to \(u\) is now
    incident to a distinct copy of \(u\).
    Similarly, we split \(v\) into \(\deg(v)\) copies so each arc that was incident to \(v\) is now
    incident to a distinct copy of \(v\).
    The resulting graph is planar, and all copies of \(u\) and \(v\) lie on a common face.
    As mentioned by Borradaile et al.~\cite[p. 1280]{DBLP:journals/siamcomp/BorradaileKMNW17}, we can then
    plug the linear time shortest paths in planar graphs algorithm of Henzinger et
    al.~\cite{DBLP:journals/jcss/HenzingerKRS97} into a divide-and-conquer procedure of Miller and
    Naor~\cite{DBLP:journals/siamcomp/MillerN95} to compute a maximum flow~\(\rho'_u\) with sources
    the copies of \(u\) and sinks the copies of \(v\) in \(O(n \log n)\) time.
    Again, if the value of this flow is greater than the excess of \(u\), we push the appropriate
    amount of flow back to the copies of \(u\) in \(O(n)\) time.
    Finally, we view \(\rho'_u\) as a flow in \(H\) to set \(\rho' = \rho'_u\).
\end{itemize}
\end{proof}

As a consequence of the previous lemma, we immediately get a variation of Lemma~\ref{lem:main} with
a running time of \(O(k^3 n \log n)\).
We use our \(O(k^3 n \log n)\) time algorithm in the improvement phase of Wang's algorithm whenever
\(k = o(\log^2 n)\). The final running time for computing a maximum flow with integer arc and vertex capacities in this case is therefore \(O(k^4 n \log n \log(kC) \log(nC))\).

\begin{theorem}
A maximum flow in an $n$-vertex planar flow network $G$ with integer arc and vertex capacities
bounded by $C$ can be computed in $O(k^3 n \log n \min(k, \log^2 n) \log(kC) \log (nC))$ time.
\end{theorem}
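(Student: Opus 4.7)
The plan is to simply combine the two improved sub-routines for computing the circulation $\gx$ with Wang's framework, as already sketched at the ends of Sections~\ref{sec:gx} and~\ref{app:small_k}. The key ingredient is Wang's reduction, which states that if $\gx$ can be computed in $O(T)$ time for any $T = \Omega(n)$, then a maximum flow in a planar graph with vertex capacities can be computed in $O(kT\log(kC)\log(nC))$ time. The factor of $k\log(kC)$ comes from the $O(k\log(kC))$ improvement iterations needed to drive $\vio(f)$ down to $2k$ (after which $O(k^2)$ Ford-Fulkerson augmentations finish the job), and the factor of $\log(nC)$ comes from the outer binary search on the value $\lambda$ of the maximum flow.

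I would then plug in each of our two algorithms for computing $\gx$ and take the better of the two bounds. Using the algorithm of Lemma~\ref{lem:main}, which runs in $T = O(k^2 n \log^3 n)$ time, Wang's reduction yields an overall running time of
\[
O(k \cdot k^2 n \log^3 n \cdot \log(kC)\log(nC)) = O(k^3 n \log^3 n \log(kC)\log(nC)).
\]
Using instead the algorithm of Section~\ref{app:small_k}, which runs in $T = O(k^3 n \log n)$ time, Wang's reduction yields
\[
O(k \cdot k^3 n \log n \cdot \log(kC)\log(nC)) = O(k^4 n \log n \log(kC)\log(nC)).
\]

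Running both algorithms (or choosing between them based on the value of $k$) and returning the faster one gives a running time of
\[
O\bigl(k^3 n \log n \cdot \min(k,\log^2 n)\cdot \log(kC)\log(nC)\bigr),
\]
which is exactly the stated bound. There is essentially no obstacle here beyond bookkeeping, since all the hard work has already been done: Lemma~\ref{lem:main} and the alternative Lemma~\ref{lem:no_separators-time} (together with the $O(k^3)$ bound on $\push$ operations for the FIFO $\pr$ variant used in the small-$k$ branch) provide the two time bounds for computing $\gx$, and Wang's framework provides the outer loop structure. The only thing to verify carefully is that both algorithms indeed return a circulation satisfying all three properties listed at the top of Section~\ref{sec:gx}, which for Lemma~\ref{lem:main} is established in its proof, and for the alternative follows by the same argument with the FIFO $\pr$ algorithm in place of batch-highest-distance.
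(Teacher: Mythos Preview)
Your proposal is correct and matches the paper's approach exactly: the theorem is simply the combination of the two bounds for computing $\gx$ (Lemma~\ref{lem:main} giving $O(k^2 n \log^3 n)$ and the FIFO-based alternative of Section~\ref{app:small_k} giving $O(k^3 n \log n)$) plugged into Wang's $O(kT\log(kC)\log(nC))$ reduction, with the $\min(k,\log^2 n)$ arising from choosing the cheaper subroutine. The paper itself gives no formal proof of the theorem beyond the two sentences at the ends of Sections~\ref{sec:gx} and~\ref{app:small_k} that you already identified.
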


\end{document}